    \newcommand{\href}[2]{#2}
\theoremstyle{definition}
\newtheorem{theorem}{Theorem}[section]
\newtheorem{definition}[theorem]{Definition}
\def\compactify{\itemsep=0pt \topsep=0pt \partopsep=0pt \parsep=0pt}
\let\latexusecounter=\usecounter
\newenvironment{itemize*}
  {\def\usecounter{\compactify\latexusecounter}
   \begin{itemize}}
  {\end{itemize}\let\usecounter=\latexusecounter}
\newenvironment{enumerate*}
  {\def\usecounter{\compactify\latexusecounter}
   \begin{enumerate}}
  {\end{enumerate}\let\usecounter=\latexusecounter}
\newenvironment{description*}
  {\begin{description}\compactify}
  {\end{description}}
\begin{document}

\title{Self-Assembly of Shapes at Constant Scale using Repulsive Forces\footnotetext{This research was supported in part by the National Science Foundation Grant CCF-1555626.}}
\author{
Austin Luchsinger\footnotemark[1]
\and
Robert Schweller\footnotemark[1]
\and
Tim Wylie\footnotemark[1]
}
\date{}
\clearpage\maketitle
\thispagestyle{empty}

\vspace*{-.5cm}
\begin{center}
$^*$Department of Computer Science\\ University of Texas - Rio Grande Valley \\
{\small \{austin.luchsinger01,robert.schweller,timothy.wylie\}@utrgv.edu} \\

\end{center}

\begin{abstract}
The algorithmic self-assembly of shapes has been considered in several
models of self-assembly. For the problem of \emph{shape construction}, we consider an extended
version of the Two-Handed Tile Assembly Model (2HAM), which contains positive (attractive)
and negative (repulsive) interactions. As a result, portions of an
assembly can become unstable and detach. In this model, we utilize
fuel-efficient computation 
to perform Turing machine
simulations for the construction of the shape.
In this paper, we show how an arbitrary shape can be constructed using an
asymptotically optimal number of distinct tile types (based on the
shape's Kolmogorov complexity). We achieve this at $O(1)$
scale factor in this straightforward model, whereas all previous results 
with sublinear scale factors utilize powerful self-assembly models containing features such as staging, tile deletion, chemical reaction networks, and tile activation/deactivation.
Furthermore, the computation and construction in our result only creates constant-size garbage assemblies as a
byproduct of assembling the shape.

\end{abstract}


\newpage
\setcounter{page}{1}

\section{Introduction} \label{sec:introduction}

A fundamental question within the field of self-assembly, and perhaps the most fundamental, is how to efficiently self-assemble general shapes with the smallest possible set of system monomers.  This question has been considered in multiple models of self-assembly.  Soloveichek and Winfree~\cite{SolWin07} first showed that any shape $S$, if scaled up sufficiently, is self-assembled within the \emph{abstract tile assembly model} (aTAM) using $O(\frac{K(S)}{\log K(S)})$ tile types, where $K(S)$ denotes the \emph{Kolmogorov} or \emph{descriptional} complexity of shape $S$ with respect to some universal Turing machine, which matches the lower bound for this problem.  This seminal result presented a concrete connection between the descriptional complexity of a shape and the efficiency of self-assembling the shape, and represents an elegant example of the potential connections between algorithmic processes and the self-assembly of matter.  The only drawback with this result is the extremely large scale factor required by construction:  the scale factor to build a shape $S$ is at least linear in $|S|$, and is typically far greater in their construction.  To lay claim as a true universal shape building scheme for potential experimental application, a much smaller scale factor is needed.  Unfortunately, example shapes exist (long thin rectangles for example) which prove that the aTAM cannot build all shapes at $o(|S|)$ scale in the minimum possible $O(\frac{K(S)}{\log K(S)})$ tile complexity.  This motivates the quest for small scale factors in more powerful self-assembly models.

The next result by Demaine, Patitz, Schweller, and Summers~\cite{RNAPods} considers general shape assembly within the \emph{staged RNAse} self-assembly model.  In this model, system tiles are separated into separate bins and mixed over distinct stages of the algorithm in a way that models realistic laboratory operations.  In addition, each tile type in this model is of type DNA or RNA, and the staging permits the addition of an RNAse enzyme at any step in the staging, thereby dissolving all tiles of type RNA, leaving DNA tiles untouched.  By adding the powerful operations of separate bins, sequential stages, and tile deletion, \cite{RNAPods} achieves general shape construction within optimal $O(\frac{K(s)}{\log K(S)})$ tile complexity using only a constant number of bins and stages, and only a logarithmic scale factor.  This leap in scale factor reduction constituted a great improvement, but required a very powerful model with both staging and tile dissolving.  In addition, the holy grail of $O(1)$ scale factor remained elusive.

The next entry into the quest for Kolmogorov optimal shape assembly at small scale comes from a recent work by Schiefer and Winfree~\cite{Schiefer2015}.  Schiefer and Winfree introduce the \emph{chemical reaction network tile assembly model} (CRN-TAM) in which chemical reaction networks and abstract tile assembly systems combine and interact by allowing CRN species to activate and deactivate tiles, while tile attachments may introduce CRN species.  This powerful interaction allowed the construction of Kolmogorov optimal systems for the assembly of general shapes at $O(1)$ scale.  Although the result provides a great scale factor, the CRN-TAM constitutes a substantial jump in model complexity and power.

In this paper we study the optimal shape building problem within one of the simplest, and most well studied models of self-assembly: \emph{the two handed tile assembly model} (2HAM), where system monomers are 4-sided tiles with glue types on each edge.  Assembly in the 2HAM proceeds whenever two previously assembled conglomerations of tiles, or assemblies, collide along matching glue types whose strength sums to some temperature threshold.  Our only addition to the model is the allowance of negative strength (i.e., \emph{repulsive}) glues, an admittedly powerful addition based on recent work~\cite{Doty2013,rgTAM,SS2013FEC,PRS2016RMN,REIF20111592}, but an addition motivated by biology~\cite{Rothemund01022000} that maintains the \emph{passive} nature of the model as system monomers are static, state-less pieces that simply attract or repulse based solely on surface chemistry (Figure \ref{fig:modelex}).  The negative glue 2HAM is one of the simplest models of late to consider the general shape assembly problem, and our result is on par with the best possible result: we show that any connected shape $S$ is self-assembled at $O(1)$-scale in the negative glue 2HAM within $O(\frac{K(S)}{\log K(S)})$ tile types, which is met by a matching lower bound.

\textbf{Our Approach.}
We achieve our result by combining the \emph{fuel efficient Turing machine} construction published in SODA 2013, \cite{SS2013FEC}, with a number of novel negative glue based gadgets.  At a high level, the fuel efficient Turing machine system extracts a description of a path that walks the pixels of the constant-scaled shape from a compressed initial binary string.  From there, the steps of the path are translated into \emph{walker} gadgets which conceptually walk along the surface of the growing path and eventually deposit an additional pixel in the specified direction, with the aid of \emph{path extension} gadgets. When all pixels have been placed, the path through the shape is filled, resulting in a scaled version of the original shape.

\textbf{Additional Related Work.}  Additional work has considered assembly of $O(1)$-scaled shapes by breaking the assembly process up into a number of distinct stages.  In particular, \cite{DDFIRSS07} introduce the \emph{staged self-assembly} model in which intermediate tile assemblies grow in separate bins and are mixed and split over a sequence of distinct stages.  This approach is applied to achieve $O(1)$-scaled shapes with $O(1)$ tiles types, but a large number of bins and stages which encode the target shape.  In~\cite{DFS2015NGA} this approach is pushed further to achieve tradeoffs in terms of bin complexity and stage complexity, while maintaining construction of a final assembly with no unbonded edges.  In~\cite{MSS2012SWT} similar constant-scale results are obtained in the \emph{step-wise self-assembly} model in which tile sets are added in sequence to a growing seed assembly.  Finally, in~\cite{Sum09} $O(1)$-scaled shapes are assembled with $O(1)$ tile types by simply adjusting the temperature of a given system over multiple assembly stages.  While each of above \emph{staged} approaches offers important algorithmic insights, the large number of stages required by each makes the approaches infeasible for large shapes.  Furthermore, the system complexity of these systems (which includes the staging algorithms) greatly exceeds the descriptional complexity of the goal shape in a typical case.

%
%

%
%

\textbf{Paper layout.} Our construction consists of a number of detailed gadgets for specific tasks.  Presentation is thus organized incrementally to walk through a version of each gadget (with symmetry there may be multiple). Section \ref{sec:definitions} gives the preliminary definitions and background. In Section \ref{sec:overview} we provide a high-level overview of the entire process as a guide for the rest of the paper. The details begin in Section \ref{sec:lineconstruction} with the construction gadgets and how to construct a line of the path. Section \ref{sec:turning} then covers turning corners. We show how the shape is filled in Section \ref{sec:filling} to complete the construction. Section \ref{sec:shapes} provides the analysis of our construction, with the lower bound on tile complexity for shape assembly presented in Section~\ref{sec:lowerBound}, and details for pushing our construction to achieve a matching upper bound in Section~\ref{sec:baseConversion}. Then we conclude in \ref{sec:conclusion}.

\section{Definitions and Model}\label{sec:definitions}
\begin{figure*}[t]
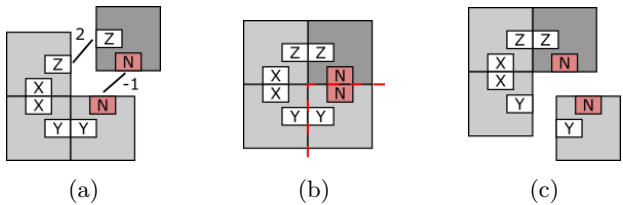

	\begin{center}
	\begin{subfigure}[b]{.15\textwidth}
		\centering
		\includegraphics[scale=6.0]{Definitions/modelex_1.pdf}
		\caption{}
	\end{subfigure}
	$\quad$
	\begin{subfigure}[b]{.15\textwidth}
		\centering
		\includegraphics[scale=6.0]{Definitions/modelex_2.pdf}
		\caption{}
	\end{subfigure}
	$\quad$
	\begin{subfigure}[b]{.15\textwidth}
		\centering
		\includegraphics[scale=6.0]{Definitions/modelex_3.pdf}
		\caption{}
	\end{subfigure}
	\caption{This figure introduces notation for our constructions, as well as a simple example of attachment and detachment events. On each tile, the glue label is presented. Red (shaded) labels represent negative glues, and the relevant glue strengths for the tiles can be found in the captions. For caption brevity, for a glue type $X$ we denote $str(X)$ simply as $X$ (e.g., $X + Y = str(X) + str(Y)$). In this temperature $\tau=1$ example, $X = 2$, $Y=1$, $Z=2$ and $N=-1$. (a) The three tile assembly on the left attaches with the single tile with strength $Z + N = 2  - 1 = \tau$ resulting in the $2\times2$ assembly shown in (b). However, this $2\times2$ assembly is unstable along the cut shown by the dotted line, since $Y + N = 1 - 1 < \tau$. Then the assembly is breakable into the assemblies shown in (c). }
	\label{fig:modelex}
	\end{center}
	\vspace*{-.5cm}
\end{figure*}

In this section we first define the two-handed tile self-assembly model with both negative and positive strength glue types.  We also formulate the problem of designing a tile assembly system that constructs a constant-scaled shape given the optimal description of that shape.

\subsection{Two Handed Self-Assembly Model (2HAM) with Negative Glues}
\textbf{Tiles and Assemblies.} A tile is an axis-aligned unit square centered at a point in $\mathbb{Z}^2$, where each edge is labeled by a \emph{glue} selected from a glue set $\Pi$.
A \emph{strength function} ${\rm str} : \Pi \rightarrow \mathbb{N}$ denotes the \emph{strength} of each glue.
Two tiles that are equal up to translation have the same \emph{type}.  A \emph{positioned shape} is any subset of $\mathbb{Z}^2$.  A \emph{positioned assembly} is a set of tiles at unique coordinates in $\mathbb{Z}^2$, and the positioned shape of a positioned assembly $A$ is the set of coordinates of those tiles.

For a given positioned assembly $\Upsilon$, define the \emph{bond graph} $G_\Upsilon$ to be the weighted grid graph in which each element of $\Upsilon$ is a vertex and the weight of an edge between tiles is the strength of the matching coincident glues or~0.\footnote{Note that only matching glues of the same type contribute a non-zero weight, whereas non-equal glues always contribute zero weight to the bond graph.  Relaxing this restriction has been considered as well~\cite{AGKS05g}.}  A positioned assembly $C$ is said to be \emph{$\tau$-stable} for positive integer $\tau$ provided the bond graph $G_C$ has min-cut at least $\tau$, and $C$ is said to be \emph{connected} if every pair of vertices in $G_C$ has a connecting path using only positive strength edges.

For a positioned assembly $A$ and integer vector $\vec{v} = (v_1, v_2)$, let $A_{\vec{v}}$ denote the positioned assembly obtained by translating each tile in $A$ by vector $\vec{v}$.  An \emph{assembly} is a translation-free version of a positioned assembly, formally defined to be a set of all translations $A_{\vec{v}}$ of a positioned assembly $A$.  An assembly is $\tau$-stable if and only if its positioned elements are $\tau$-stable.  An assembly is \emph{connected} if its positioned elements are connected.  A \emph{shape} is the set of all integer translations for some subset of $\mathbb{Z}^2$, and the shape of an assembly $A$ is defined to be the union of the positioned shapes of all positioned assemblies in $A$.  The \emph{size} of either an assembly or shape $X$, denoted as $|X|$, refers to the number of elements of any positioned element of $X$.

\textbf{Breakable Assemblies.}  An assembly is \emph{$\tau$-breakable} if it can be cut into two pieces along a cut whose strength sums to less than $\tau$.  Formally, an assembly $C$ is \emph{breakable} into assemblies $A$ and $B$ if $A$ and $B$ are connected, and the bond graph $G_{C'}$ for some assembly $C' \in C$ has a cut $(A',B')$ for $A'\in A$ and $B'\in B$ of strength less than $\tau$. We call $A$ and $B$ a pair of \emph{pieces} of the breakable assembly $C$.

\textbf{Combinable Assemblies.}
Two assemblies are \emph{$\tau$-combinable} provided they may attach along a border whose strength sums to at least $\tau$. Formally, two assemblies $A$ and $B$ are \emph{$\tau$-combinable} into an assembly $C$ provided $G_{C'}$ for any $C'\in C$ has a cut $(A',B')$ of strength at least $\tau$ for some $A'\in A$ and $B' \in B$.  We call $C$ a \emph{combination} of $A$ and $B$.

Note that $A$ and $B$ may be combinable into an assembly that is not stable (and thus breakable).  This is a key property that is leveraged throughout our constructions.  See Figure~\ref{fig:modelex} for an example.  For a system $\Gamma = (T,\tau)$, we say $A \rightarrow^{\Gamma}_1 B$ for assemblies $A$ and $B$ if either $A$ is $\tau$-breakable into pieces that include $B$, or $A$ is $\tau$-combinable with some producible assembly, or if $A=B$.  Intuitively this means that $A$ may grow into assembly $B$ through one or fewer combination or break reactions.  We define the relation $\rightarrow^{\Gamma}$ to be the transitive closure of $\rightarrow^{\Gamma}_1$, ie., $A \rightarrow^{\Gamma} B$ means that $A$ may grow into $B$ through a sequence of combination or break reactions.

\textbf{Producibility and Unique Assembly.}
A \emph{two-handed tile assembly system (2HAM system)} is an ordered pair $(T,\tau)$ where $T$ is a set of single tile assemblies, called the \emph{tile set}, and $\tau \in \mathbb{N}$ is the \emph{temperature}.
Assembly proceeds by repeated combination of assembly pairs, or breakage of unstable assemblies, to form new assemblies starting from the initial tile set.  The \emph{producible assemblies} are those constructed in this way.
Formally:

\begin{definition}[2HAM Producibility]
    For a given 2HAM system $\Gamma =(T,\tau)$, the set of \emph{producible assemblies} of $\Gamma$, denoted $\texttt{PROD}_\Gamma$, is defined recursively:
    \begin{itemize}
        \setlength\itemsep{.1em}
        \item (Base) $T \subseteq \texttt{PROD}_\Gamma$
        \item (Combinations) For any $A,B \in \texttt{PROD}_\Gamma$ such that $A$ and $B$ are $\tau$-combinable into $C$, then $C\in\texttt{PROD}_\Gamma$.
        \item (Breaks) For any assembly $C \in \texttt{PROD}_\Gamma$ that is $\tau$-breakable into $A$ and $B$, then $A,B \in \texttt{PROD}_\Gamma$.
    \end{itemize}
\end{definition}

\begin{definition}[Terminal Assemblies]
A \emph{terminal} assembly of a 2HAM system is a producible assembly that can not break and can not combine with any other producible assembly.  Formally, an assembly $A \in \texttt{PROD}_\Gamma$ of a 2HAM system $\Gamma =(T,\tau)$ is \emph{terminal} provided $A$ is $\tau$-stable (will not break) and not $\tau$-combinable with any producible assembly of $\Gamma$ (will not combine).
\end{definition}


\begin{definition}[Unique Assembly - with bounded garbage]
A 2HAM system \emph{uniquely} produces an assembly $A$ if all producible assemblies have a forward growth path towards the terminal assembly $A$, with the possible exception of some $O(1)$-sized producible assemblies.  Formally, a 2HAM system $\Gamma =(T,\tau)$ \emph{uniquely} produces an assembly $A$ provided that $A$ is terminal, and for all $B \in \texttt{PROD}_\Gamma$ such that $|B| \geq c$ for some constant $c$, $B \rightarrow^{\Gamma} A$.
\end{definition}

\begin{definition}[Unique Shape Assembly - with bounded garbage]
A 2HAM system uniquely produces a shape $S$ if all producible assemblies have a forward growth path to a terminal assembly of shape $S$ with the possible exception of some $O(1)$-sized producible assemblies.  Formally, a 2HAM system $\Gamma =(T,\tau)$ \emph{uniquely assembles} a finite shape $S$ if for every $A \in \texttt{PROD}_\Gamma$ such that $|A|\leq c$ for some constant $c$, there exists a terminal $A' \in \texttt{PROD}_\Gamma$ of shape $S$ such that $A \rightarrow^{\Gamma} A'$.
\end{definition}

\subsection{Descriptional Complexity}
\begin{definition}[Kolmogorov Complexity]
The \emph{Kolmogorov complexity} (or \emph{descriptional complexity}) of a shape $S$ with respect to some fixed universal Turing machine $U$ is the smallest bit string such that $U$ outputs a list of exactly the positions in some translation of shape $S$ when provided the bit string as input.  We denote this value as $K(S)$.
\end{definition} 

\section{Concept/Construction Overview} \label{sec:overview}
This section presents a high-level overview of the shape construction process. First, we will present the conceptual overview, which explains the fundamental ideas behind our shape self-assembly process. Then, we will show a high-level look at how our construction implements this process.
\subsection{Conceptual Overview}

Starting with the Kolmogorov-optimal description of a shape (as a base $b$ string, $b>2$), we simulate a Turing machine which converts any base $b$ string into its equivalent base $2$ representation (Sec.~\ref{sec:baseConversion})
We then simulate another Turing machine that takes the binary description of a shape, finds a spanning tree for that shape, and outputs a path around that spanning tree as a set of instructions (forward, left, right) starting from a beginning node on the perimeter.

A simple depth-first search will find the spanning tree for any shape. Scaling the shape to scale 2 creates a perimeter \emph{path} that outlines the spanning tree, and assembles the shape. Scaling again, this time by a multiple of 3, now allows space for the perimeter path with an equal-sized space buffer on both sides (Fig.~\ref{fig:scaling}).

\begin{figure}[t]
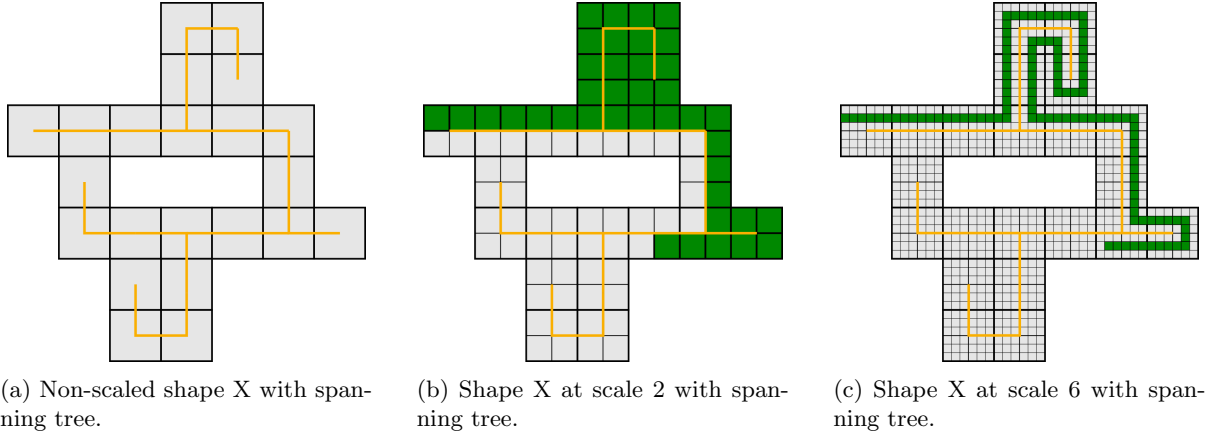

	\centering
	\begin{subfigure}[b]{.3\textwidth}
		\centering
		\includegraphics[scale=2.0]{Overview/scaling_1.pdf}
		\caption{Non-scaled shape X with spanning tree.}
	\end{subfigure}
	$\quad$
	\begin{subfigure}[b]{.3\textwidth}
		\centering
		\includegraphics[scale=2.0]{Overview/scaling_2.pdf}
		\caption{Shape X at scale 2 with spanning tree.}
	\end{subfigure}
	$\quad$
	\begin{subfigure}[b]{.3\textwidth}
		\centering
		\includegraphics[scale=2.0]{Overview/scaling_3.pdf}
		\caption{Shape X at scale 6 with spanning tree.}
	\end{subfigure}
	\caption{The Turing machine calculates a spanning tree of the tiles in the shape (a), scales the shape in order to allow a path around the spanning tree (b), and further scales the shape for the gadgets (c). }
	\label{fig:scaling}
\end{figure}

\paragraph{Process Overview:}
\vspace*{-.2cm}
\begin{enumerate}
    \setlength\itemsep{.1em}
  \item Given the Kolmogorov-optimal description of a shape, run a base conversion Turing machine to get its binary equivalent.
  \item Given that binary string, run another Turing machine that outputs the description of a path around the shape's spanning tree as a set of instructions (forward, left, right).
  \item Given those instructions, build the path. Our construction begins with a \emph{tape} containing this \emph{path} description for a scale 24 shape.
\end{enumerate}

\subsection{Construction Overview} \label{subsec:conover}
The construction overview begins at step 3 of the conceptual overview, using the output from step 2. Throughout this paper, we will be referring to this output as the \emph{tape}, meaning the fuel-efficient Turing machine tape with \emph{path}-building instructions encoded on it. This \emph{tape} is detailed in Section~\ref{sec:lineconstruction}.
\begin{figure}[t]
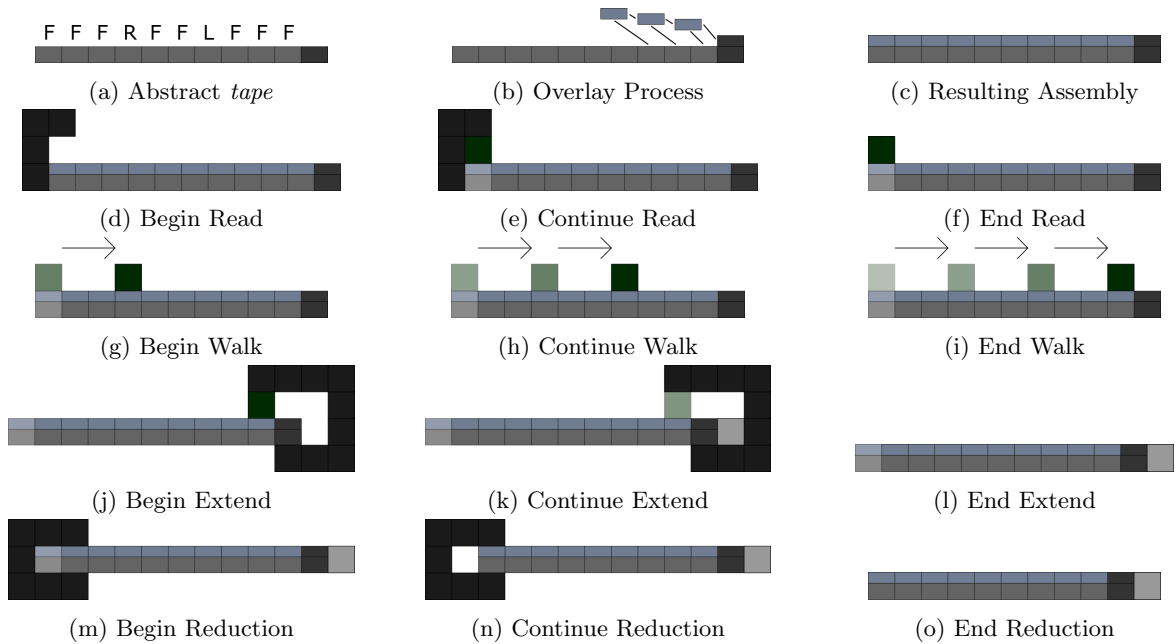

	\centering
	\begin{subfigure}[b]{.3\textwidth}
		\centering
		\includegraphics[scale=2.5]{Overview/overlay_1.pdf}
		\caption{Abstract \textit{tape}}
	\end{subfigure}
	$\quad$
	\begin{subfigure}[b]{.3\textwidth}
		\centering
		\includegraphics[scale=2.5]{Overview/overlay_2.pdf}
		\caption{Overlay Process}
	\end{subfigure}
	$\quad$
	\begin{subfigure}[b]{.3\textwidth}
		\centering
		\includegraphics[scale=2.5]{Overview/overlay_3.pdf}
		\caption{Resulting Assembly}
	\end{subfigure}\\
	\begin{subfigure}[b]{.3\textwidth}
		\centering
		\includegraphics[scale=2.5]{Overview/read_1.pdf}
		\caption{Begin Read}
	\end{subfigure}
	$\quad$
	\begin{subfigure}[b]{.3\textwidth}
		\centering
		\includegraphics[scale=2.5]{Overview/read_2.pdf}
		\caption{Continue Read}
	\end{subfigure}
	$\quad$
	\begin{subfigure}[b]{.3\textwidth}
		\centering
		\includegraphics[scale=2.5]{Overview/read_3.pdf}
		\caption{End Read}
	\end{subfigure}\\
	\begin{subfigure}[b]{.3\textwidth}
		\centering
		\includegraphics[scale=2.5]{Overview/walk_1.pdf}
		\caption{Begin Walk}
	\end{subfigure}
	$\quad$
	\begin{subfigure}[b]{.3\textwidth}
		\centering
		\includegraphics[scale=2.5]{Overview/walk_2.pdf}
		\caption{Continue Walk}		
	\end{subfigure}
	$\quad$
	\begin{subfigure}[b]{.3\textwidth}
		\centering
		\includegraphics[scale=2.5]{Overview/walk_3.pdf}
		\caption{End Walk}		
	\end{subfigure}\\
	\begin{subfigure}[b]{.3\textwidth}
		\centering
		\includegraphics[scale=2.5]{Overview/extend_1.pdf}
		\caption{Begin Extend}
	\end{subfigure}
	$\quad$
	\begin{subfigure}[b]{.3\textwidth}
		\centering
		\includegraphics[scale=2.5]{Overview/extend_2.pdf}
		\caption{Continue Extend}		
	\end{subfigure}
	$\quad$
	\begin{subfigure}[b]{.3\textwidth}
		\centering
		\includegraphics[scale=2.5]{Overview/extend_3.pdf}
		\caption{End Extend}		
	\end{subfigure}\\
		\begin{subfigure}[b]{.3\textwidth}
		\centering
		\includegraphics[scale=2.5]{Overview/reduce_1.pdf}
		\caption{Begin Reduction}
	\end{subfigure}
	$\quad$
	\begin{subfigure}[b]{.3\textwidth}
		\centering
		\includegraphics[scale=2.5]{Overview/reduce_2.pdf}
		\caption{Continue Reduction}		
	\end{subfigure}
	$\quad$
	\begin{subfigure}[b]{.3\textwidth}
		\centering
		\includegraphics[scale=2.5]{Overview/reduce_3.pdf}
		\caption{End Reduction}		
	\end{subfigure}\\
	\caption{(a)-(c) The \emph{overlay} process covers the tape while making the data readable on top.
	(d)-(f) \emph{Reading} the leftmost piece of data and creating an information block.
	(g)-(i) \emph{Information Walking} on the path to the end where the information is used.
	(j)-(l) When the information block reaches the end of the path, the block triggers a \emph{Path Extension}.
	(m)-(o) Once the information has been read, \emph{Tape Reduction} removes that piece of the tape.}
	\label{fig:combover1}
\end{figure}

\paragraph{Construction Steps Overview:}
\vspace*{-.2cm}
\begin{enumerate}
    \item \textbf{Overlay.}
The overlay process is the first step in shape construction. Figure~\ref{fig:combover1}a-c 
shows an abstraction of how the output from step 2 in the concept overview gets covered during overlay process. The overlay initiator gadget can only attach to a completed tape. This begins a series of cooperative attachments that will cover the tape. Each bit of information on the tape is covered by its corresponding overlay piece, so the information is readable on the top of the overlay. The overlay process is finished once the entire tape is covered.

\item \textbf{Reading.}
After the overlay process is complete, information can be extracted from the tape through the read process (Figs. \ref{fig:combover1}d-f). 
Information can only be extracted from the covered leftmost section of the tape if it has not already been read. When a tape section is read, information is extracted from the tape and a corresponding information block is created.

\item \textbf{Information Walking.}
Once the information block is created, it begins walking until it reaches the end of the tape/path (Figs. \ref{fig:combover1}g-i). 
Walking gadgets allow the information to travel down the entire path.

\item \textbf{Path Extension.}
When an information block cannot travel any further, the path is extended (Figs. \ref{fig:combover1}j-l). 
The path can be extended forward, left, or right. The direction of the path extension is dependent on which information block is at the end of the path. After the path is extended, the information block is removed from the path.

\item \textbf{Tape Reduction.}
Once information is extracted from the tape and sent down the path, one tape section is removed (Figs. \ref{fig:combover1}j-l). 
Only tape sections that have been read are removed, which then allows the next section to be read. This process continues until every section of the tape is read/removed.

\begin{figure}[t!]
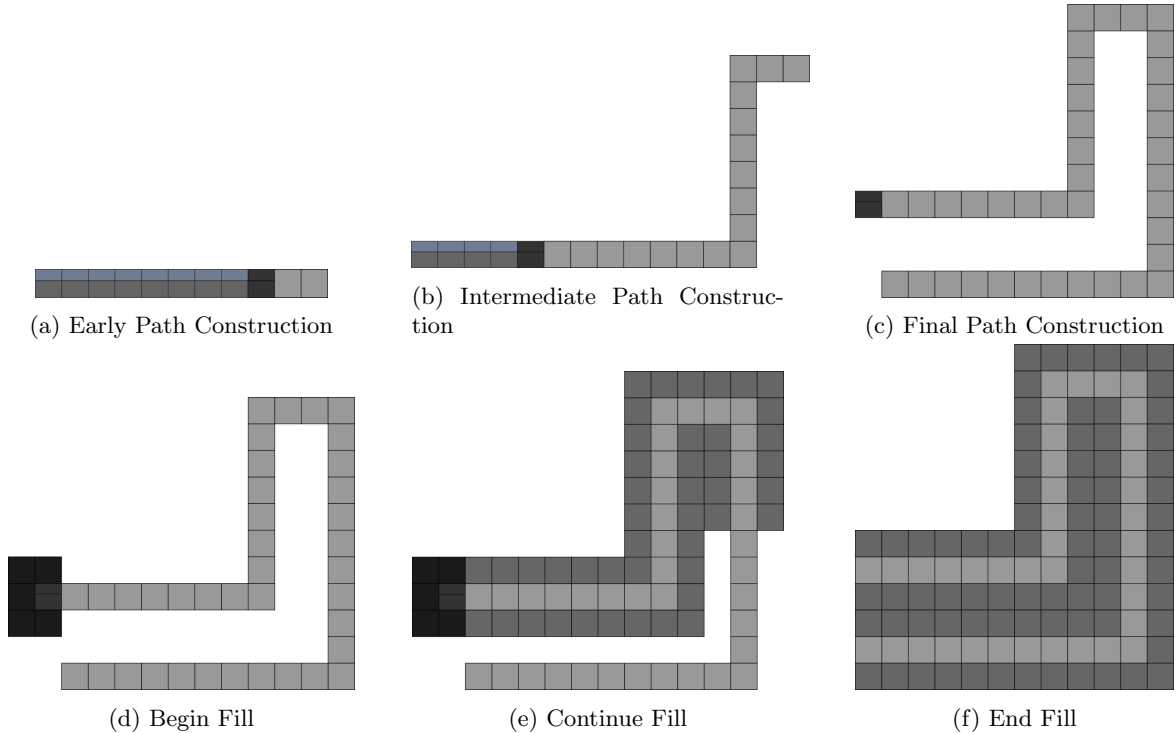

	\centering
	\begin{subfigure}[b]{.3\textwidth}
		\centering
		\includegraphics[scale=2.5]{Overview/continue_1.pdf}
		\caption{Early Path Construction}
	\end{subfigure}
	$\quad$
	\begin{subfigure}[b]{.3\textwidth}
		\centering
		\includegraphics[scale=2.5]{Overview/continue_2.pdf}
		\caption{Intermediate Path Construction}		
	\end{subfigure}
	$\quad$
	\begin{subfigure}[b]{.3\textwidth}
		\centering
		\includegraphics[scale=2.5]{Overview/continue_3.pdf}
		\caption{Final Path Construction}
	\end{subfigure}\\
\begin{subfigure}[b]{.3\textwidth}
		\centering
		\includegraphics[scale=2.5]{Overview/fill_1.pdf}
		\caption{Begin Fill}
	\end{subfigure}
	$\quad$
	\begin{subfigure}[b]{.3\textwidth}
		\centering
		\includegraphics[scale=2.5]{Overview/fill_2.pdf}
		\caption{Continue Fill}		
	\end{subfigure}
	$\quad$
	\begin{subfigure}[b]{.3\textwidth}
		\centering
		\includegraphics[scale=2.5]{Overview/fill_3.pdf}
		\caption{End Fill}		
	\end{subfigure}\\	
	\caption{(a)-(c) The process is repeated until all information has been read/removed from the tape.
	(d)-(f) The final step is \emph{Path Filling} the shape.}
	\label{fig:combover2}
\end{figure}

\item \textbf{Repeat.}
Repeat the tape read, information walk, path extend, and tape reduction processes until all path instructions have been read (Figs. \ref{fig:combover2}a-c). 

\item \textbf{Path Filling.}
The final tape section that gets read begins the shape fill process (Figs. \ref{fig:combover2}d-f). 
In this process, the path is padded with tiles which fill it in and results in the final shape.

\end{enumerate}


\section{Forward Path Construction} \label{sec:lineconstruction}
In this section, we detail the steps presented in the construction overview (Sec.~\ref{subsec:conover}). This is the process by which information is read from the \emph{tape} and portions of the \emph{path} are assembled. For clarity, in this section we will just be showing how \emph{forward} instructions are read in order to build a linear section of the \emph{path}. Although each glue strength can be found in the figure captions, there is a full table of glue strengths in Table~\ref{tbl:gluetable}.

\begin{table}[H]
\begin{center}
\begin{tabular}{ | c | c | c | c | }
    \hline
    \textbf{Label}           &   \textbf{Strength}  &   \textbf{Label}       &   \textbf{Strength} \\ \hline
F,L,R,M             &   1         &   O,T         &   7        \\ \hline
A,X,f,k,l,r,p,w,h   &   2         &   G,H,J,U,c,g,j,m,s,t,u,v,x,z  &   8        \\ \hline
B,b,e,$f^*$         &   3         &   K,P,V,Y,Z   &   9       \\ \hline
C,a                 &   4         &   Q           &   -4       \\ \hline
N,E,S,W,i,q         &   5         &   o           &   -5         \\ \hline
d                   &   6         &   D           &   -7       \\ \hline
\end{tabular}
\caption{The glue strengths of each glue label in the construction system.}
\label{tbl:gluetable}
\end{center}
\end{table} 

We will also cover the gadgets required for each step, and review the tape construction from the fuel-efficient Turing machine used in \cite{SS2013FEC}. This construction uses pre-constructed assemblies called gadgets. These gadgets are designed to work in a temperature $\tau = 10$ system. In our figures, a perpendicular black line through the middle of the edge of two adjacent tiles indicates a unique infinite strength bond (i.e. the strength of the glue is $\gg$ than $\tau$ such that no detachment events can occur in which these tiles are separated). Each gadget provides a different function to the shape creation process. Some gadgets have special-case versions which are very similar to their standard counterparts. For clarity of understanding, in this section we will only be describing the standard version of each gadget. For a description of all special case gadgets, see Appendix~\ref{app:specialgadgets}.


\begin{figure}[t]
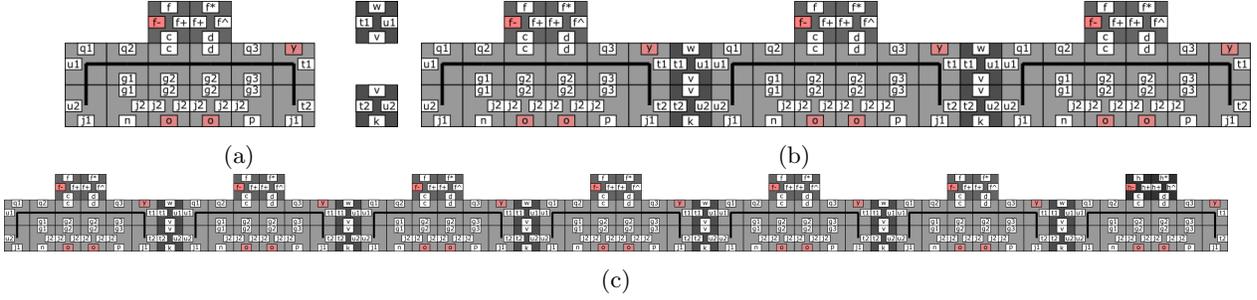

	\centering
	\begin{subfigure}[b]{.28\textwidth}
		\includegraphics[scale=3.9]{ConstructionGadgets/tape_1.pdf}
		\caption{}
	\end{subfigure}
	\begin{subfigure}[b]{.6\textwidth}
		\includegraphics[scale=3.9]{ConstructionGadgets/tape_2.pdf}
		\caption{}
	\end{subfigure}\\
	\begin{subfigure}[b]{1\textwidth}
		\centering
		\includegraphics[scale=2.4]{ConstructionGadgets/tape_3.pdf}
		\caption{}
	\end{subfigure}
	\caption{(a) A single section of a fuel-efficient Turing machine \emph{tape} used in this construction, along with buffer tiles that separate tape sections. (b) A series of joined tape sections. (c) A completed \emph{tape} consisting of all \emph{forward} instructions.}
	\label{fig:tapesections}
\end{figure}

\paragraph{Turing Machine Tape.}
A detailed look at a fuel-efficient Turing machine \emph{tape} is seen in Figure~\ref{fig:tapesections}. Notice each tape section has a pair of tiles on top of it where the information is stored. When talking about the \emph{tape} from Section~\ref{subsec:conover}, each pair of dark grey tiles on top of the tape sections represents a piece of information describing the \emph{path}.


\begin{figure}[]
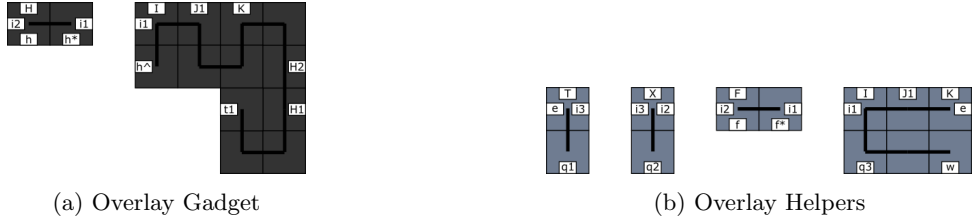

	\centering
	\begin{subfigure}[b]{.45\textwidth}
		\centering
		\includegraphics[scale=4.0]{ConstructionGadgets/overlay_gadget.pdf}
		\caption{Overlay Gadget}
	\end{subfigure}
	$\quad$
	\begin{subfigure}[b]{.45\textwidth}
		\centering
		\includegraphics[scale=4.0]{ConstructionGadgets/overlay_helpers.pdf}
		\caption{Overlay Helpers}
	\end{subfigure}
	\caption{(a) The overlay gadget comes in two parts, and attaches to the tape using its \emph{h}-type glues, which are unique to the end of a completed Turing machine tape. (b) Four filler blocks make up the overlay helpers. Three of these helpers are standard fillers, but the horizontal domino piece has multiple versions to cover the different information bits on the tape.}
	\label{fig:overlaygadget}
\end{figure}

\begin{figure}[t]
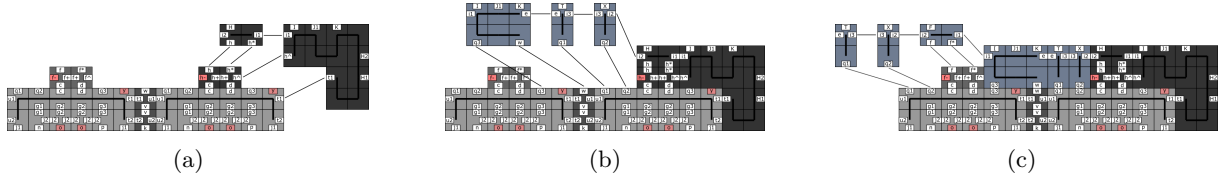

	\centering
	\begin{subfigure}[b]{.3\textwidth}
		\centering
		\includegraphics[scale=2.0]{LineConstruction/line_overlay_1.pdf}
		\caption{}
	\end{subfigure}
	$\quad$
	\begin{subfigure}[b]{.3\textwidth}
		\centering
		\includegraphics[scale=2.0]{LineConstruction/line_overlay_2.pdf}
		\caption{}
	\end{subfigure}
	$\quad$
	\begin{subfigure}[b]{.3\textwidth}
		\centering
		\includegraphics[scale=2.0]{LineConstruction/line_overlay_3.pdf}
		\caption{}
	\end{subfigure}
	\caption{(a) The Overlay Initiator Gadget attaches to the end of the tape ($t1 + $ \textit{h\^} $ = 8 + 2 \geq \tau = 10$), and then an overlay tile, unique to the end of the tape, attaches ($h + h\ast + i1 = 2 + 3 + 5 \geq \tau$). This begins the rest of the overlay process. Two vertical domino assemblies attach ($q2 + i2 = 5 + 5 \geq \tau, q1 + i3 = 5 + 5 \geq \tau $), and finish covering the first tape section as seen in (b). A 2x3 overlay block attaches ($q3 + w + e = 5 + 2 + 3 \geq \tau$), and covers the first portion of the next tape section. At this point, one of three horizontal domino overlay assemblies attach depending on which information bit is on the tape section. In (c), since the $f$ bit is on the tape, it's corresponding overlay domino attaches ($f + f\ast + i1 = 2 + 3 + 5 \geq \tau $) which carries the \textit{forward} instruction. Two more vertical dominos finish covering the tape section, and this process is repeated until every section of the tape is covered.}
	\label{fig:lineoverlay}
\end{figure}

\paragraph{Overlay.}
The overlay gadget (Fig.~\ref{fig:overlaygadget}) is the first assembly to attach to the completed \emph{tape}, and allows the attachment of subsequent assemblies (overlay helpers) which end up covering the entire \emph{tape}. In addition to being designed to initiate the overlay process (Fig. \ref{fig:lineoverlay}), the overlay gadget also has glues on its east side which will be the starting location for the \emph{path} we are constructing. Once the gadget begins the overlay process described in Section~\ref{subsec:conover}, the tape gets covered section by section. Each bit of information on the \emph{tape} has a corresponding overlay piece that will attach to it. This step essentially ``flattens'' the surface of the \emph{tape}, making the information easily readable by the gadgets used in this construction.



\begin{figure}[]
    \vspace{-1cm}
	\centering
	\begin{subfigure}[b]{.45\textwidth}
		\centering
		\includegraphics[scale=4.0]{ConstructionGadgets/read_combined.pdf}
		\caption{Read gadget and helpers}
	\end{subfigure}
	$\quad$
	\begin{subfigure}[b]{.45\textwidth}
		\centering
		\includegraphics[scale=4.0]{ConstructionGadgets/walking_combined.pdf}
		\caption{Walk gadget and helpers}
	\end{subfigure}
	\caption{(a) The read gadget (dark grey) has information-specific versions. The \emph{F} glues in this figure indicate that this gadget reads the \emph{forward} instruction. The read helpers (blue) allow the first information block (green) to attach to the top of the overlay, essentially extracting the information from the tape. (b) Like the read gadget, the walking gadget (dark grey) also has information-specific versions. The light-green assembly is the corresponding information block that this walker will be transporting. The two other helpers (olive drab) remove the previous info block/helpers and the walking gadget.}
	\label{fig:combinedgadgets1}

    \vspace{1cm}
	\begin{subfigure}[b]{.3\textwidth}
		\centering
		\includegraphics[scale=2.0]{LineConstruction/line_read_1.pdf}
		\caption{}
	\end{subfigure}
	$\quad$
	\begin{subfigure}[b]{.3\textwidth}
		\centering
		\includegraphics[scale=2.0]{LineConstruction/line_read_2.pdf}
		\caption{}
	\end{subfigure}
	$\quad$
	\begin{subfigure}[b]{.3\textwidth}
		\centering
		\includegraphics[scale=2.0]{LineConstruction/line_read_3.pdf}
		\caption{}
	\end{subfigure}
	\begin{subfigure}[b]{.3\textwidth}
		\centering
		\includegraphics[scale=2.0]{LineConstruction/line_read_4.pdf}
		\caption{}
	\end{subfigure}
	$\quad$
	\begin{subfigure}[b]{.3\textwidth}
		\centering
		\includegraphics[scale=2.0]{LineConstruction/line_read_5.pdf}
		\caption{}
	\end{subfigure}
	$\quad$
	\begin{subfigure}[b]{.3\textwidth}
		\centering
		\includegraphics[scale=2.0]{LineConstruction/line_read_6.pdf}
		\caption{}
	\end{subfigure}
	\caption{(a) The Read Gadget attaches ($n + T + F = 2 + 7 + 1 \geq \tau$). (b) This allows the first read-helpers to attach ($K + M = 9 + 1 \geq \tau, J1 + A1 = 8 + 2 \geq \tau$). In (c) the first form of an information block attaches ($F + F + J2 = 1 + 1 + 8 \geq \tau$). Since the \textit{forward} version of the read gadget was used, the \textit{forward} information block is placed. After the information block is placed, the penultimate read-helper attaches ($A2 + A2 + O1 = 2 + 2 +7 \geq \tau$). (d) The final read-helper attaches ($J3 + J3 + Q = 8 + 8 - 4 \geq \tau$). Notice the attraction of the $J3$ glues overcomes the negativity of the $Q$ glue, allowing attachment. This, however, destabilizes the read gadget along the cut shown in (e)($F + F + M + n + T + F + Q = 1 + 1 + 1 + 2 + 7 + 1 - 7 \leq \tau$). (f) The unstable gadget falls off and the information block (by way of the read-helpers) remains attached to the tape-overlay.}
	\label{fig:lineread}
\end{figure}

\paragraph{Read.}
The read gadget (Fig.~\ref{fig:combinedgadgets1}a) is required for ``reading'' the Turing machine \emph{tape}. Essentially, this gadget extracts the information that is relayed from the \emph{tape} through the overlay blocks. The read process (Fig. \ref{fig:lineread}) can only begin if the leftmost tape section has not previously been read. Once attached, the gadget allows the attachment of an information block (corresponding to the information being read) that will be used to carry the build instructions through the rest of our construction. Once the information block is present, the remaining read-helpers can attach. The final helper destabilizes the read gadget, allowing it to fall off and expose the newly attached information block. Notice, that after the read gadget detaches, the overlay is now altered and this \emph{tape} section can no longer be read. The read gadget was designed to produce this information block, alter the \emph{tape} section that is being read (making it unreadable), and then detach from the assembly. This design ensures that each \emph{tape} section is only read once, and allows us to transfer the instructions to other locations in our construction via the walking gadgets.



\begin{figure}[]
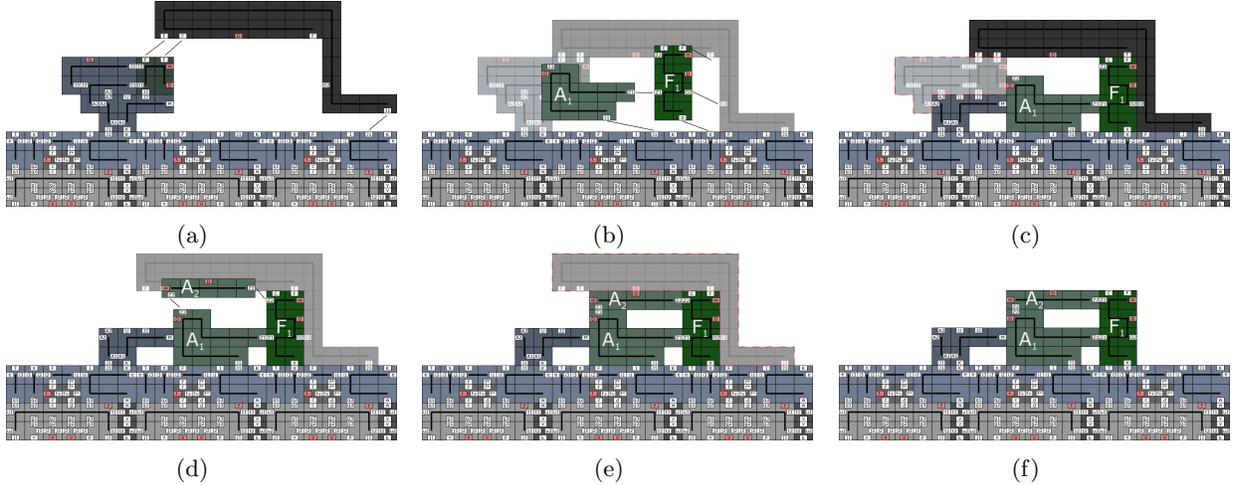

	\centering
	\begin{subfigure}[b]{.3\textwidth}
		\centering
		\includegraphics[scale=1.75]{LineConstruction/line_walk_a_1.pdf}
		\caption{}
	\end{subfigure}
	$\quad$
	\begin{subfigure}[b]{.3\textwidth}
		\centering
		\includegraphics[scale=1.75]{LineConstruction/line_walk_a_2.pdf}
		\caption{}
	\end{subfigure}
	$\quad$
	\begin{subfigure}[b]{.3\textwidth}
		\centering
		\includegraphics[scale=1.75]{LineConstruction/line_walk_a_3.pdf}
		\caption{}
	\end{subfigure}
	\begin{subfigure}[b]{.3\textwidth}
		\centering
		\includegraphics[scale=1.75]{LineConstruction/line_walk_a_4.pdf}
		\caption{}
	\end{subfigure}
	$\quad$
	\begin{subfigure}[b]{.3\textwidth}
		\centering
		\includegraphics[scale=1.75]{LineConstruction/line_walk_a_5.pdf}
		\caption{}
	\end{subfigure}
	$\quad$
	\begin{subfigure}[b]{.3\textwidth}
		\centering
		\includegraphics[scale=1.75]{LineConstruction/line_walk_a_6.pdf}
		\caption{}
	\end{subfigure}
	\caption{(a) A Walking Gadget (specific to the information block) attaches to the overlay and the information block ($F + F + +J1 = 1 + 1 + 8 \geq \tau$). (b) A new \textit{forward} information block attaches ($F + O2 + X = 1 + 7 + 2 \geq \tau$). The first walking-helper attaches to the new information block and the overlay($J1 + Z1 + D = 8 + 9 - 7 \geq \tau$). (c) The negative interaction between the $D$ glues destabilizes the old information block, along with the two walking-helpers($J2 + A2 + A2 + F + F + D = 8 + 2 + 2 + 1 + 1 - 7 \leq \tau$). (d) The old information block and walking-helpers break off, and the second walking-helper now ($Z2 + Z2 + D = 9 + 9 - 7 \geq \tau$). (e) Once the second walking-helper is attached, the walking gadget becomes unstable ($F + O2 + J1 + D = 1 + 7 + 8 - 7 \leq \tau$). (f) The walking gadget detaches and leaves the new information block with two walking-helpers attached.}
	\label{fig:linewalk1}
\end{figure}

\paragraph{Information Walking.}
The walking gadgets (Fig.~\ref{fig:combinedgadgets1}b) begin the information walking process (Fig. \ref{fig:linewalk1}), which allows instructions to travel throughout our construction. After a tape section has been read and an information block has been placed, a walking gadget can attach. Once attached, the walking gadget allows a new information block (of the same type) to attach, while also detaching the the previous information block. Notice that this detachment will always be $O(1)$ size. After the previous information is removed, the walking gadget detaches as well, allowing the new info block to interact with other gadgets. Thus, the same information has traveled from the \emph{tape}, through the overlay, and is now traveling along the \emph{tape}. This process is repeated until the information has traveled to the end of the \emph{path}, at which point it is used to construct the next \emph{path} portion. This method is desirable because it does not allow duplicate readable instructions to be attached to the path at any time.




\begin{figure}[]
	\centering
	\begin{subfigure}[b]{.45\textwidth}
		\centering
		\includegraphics[scale=4.0]{ConstructionGadgets/extension_combined.pdf}
		\caption{Extension gadget and helpers}
	\end{subfigure}
	$\quad$
	\begin{subfigure}[b]{.45\textwidth}
		\centering
		\includegraphics[scale=4.0]{ConstructionGadgets/reduction_combined.pdf}
		\caption{Reduction gadget and helpers}
	\end{subfigure}
	\caption{(a) The path extension gadget (dark grey) also has information-specific versions. The light grey assemblies attach piece-by-piece to form a path portion. The purple helpers are used to detach the extension gadget, along with the info block that allowed this extension. (b) The tape reduction gadget (dark grey) attaches once a given \emph{tape} section has already been read. The reduction helpers (red and dark grey) attach to the gadget, and build up the repulsive force that is responsible for the destabilization of the gadget, along with the previously read tape section.}
	\label{fig:combinedgadgets2}
    \vspace{1cm}
	\begin{subfigure}[b]{.3\textwidth}
		\centering
		\includegraphics[scale=1.75]{LineConstruction/line_extend_a_1.pdf}
		\caption{}
	\end{subfigure}
	$\quad$
	\begin{subfigure}[b]{.3\textwidth}
		\centering
		\includegraphics[scale=1.75]{LineConstruction/line_extend_a_2.pdf}
		\caption{}
	\end{subfigure}
	$\quad$
	\begin{subfigure}[b]{.3\textwidth}
		\centering
		\includegraphics[scale=1.75]{LineConstruction/line_extend_a_3.pdf}
		\caption{}
	\end{subfigure}
	\begin{subfigure}[b]{.3\textwidth}
		\centering
		\includegraphics[scale=1.75]{LineConstruction/line_extend_a_4.pdf}
		\caption{}
	\end{subfigure}
	$\quad$
	\begin{subfigure}[b]{.3\textwidth}
		\centering
		\includegraphics[scale=1.75]{LineConstruction/line_extend_a_5.pdf}
		\caption{}
	\end{subfigure}
	$\quad$
	\begin{subfigure}[b]{.3\textwidth}
		\centering
		\includegraphics[scale=1.75]{LineConstruction/line_extend_a_6.pdf}
		\caption{}
	\end{subfigure}
	\caption{(a) The forward-extension gadget attaches to the information block and Turing tape ($B + C + F + p = 3 + 4 + 1 + 2 \geq \tau$). (b) Once the gadget it in place, two path blocks attach ($X + H1 = 2 + 8 \geq \tau, P2 + H2 = 2 + 8 \geq \tau$). (c) The extension-helpers attach. The first ($O3 + V1 = 7 + 9 \geq \tau$), and the second ($V0 + V0 = 9 + 9 \geq \tau$).(d) The second extension-helper comes with the negative $D$ glue that causes targeted destabilization ($X + p + J1 + X + D = 2 + 2 + 8 + 2 - 7 \leq \tau$). The extension gadget and its helpers, along with the information block and its helpers are no longer stable along their tape-overlay edges. (e) Once the unstable assemblies detach, a second half of the path piece attaches ($P2 + P2 = 9 + 9 \geq \tau$). (f) The final result is a one path-pixel extension of the path.}
	\label{fig:lineextend}
\end{figure}

\paragraph{Path Extension.}
After the information block has reached the end of the \emph{path}, a path extension gadget (Figure~\ref{fig:combinedgadgets2}a) can attach to the assembly. Once attached, the gadget allows the \emph{path} extension process (Fig. \ref{fig:lineextend}) to begin, which extends the \emph{path} in a given direction (forward, left, or right) based on the instruction carried by the information block. The extension gadget ``reads'' the information block, and then extends the path in the given direction. Afterwards, the extension helpers destabilize the information block and extension gadget, causing a $O(1)$ sized detachment. We designed the extension gadget to essentially replace an instruction block with a corresponding \emph{path} portion. This design allows us to attach a $O(1)$ sized \emph{path} portion for each instruction read from the \emph{tape}.




\begin{figure}[]
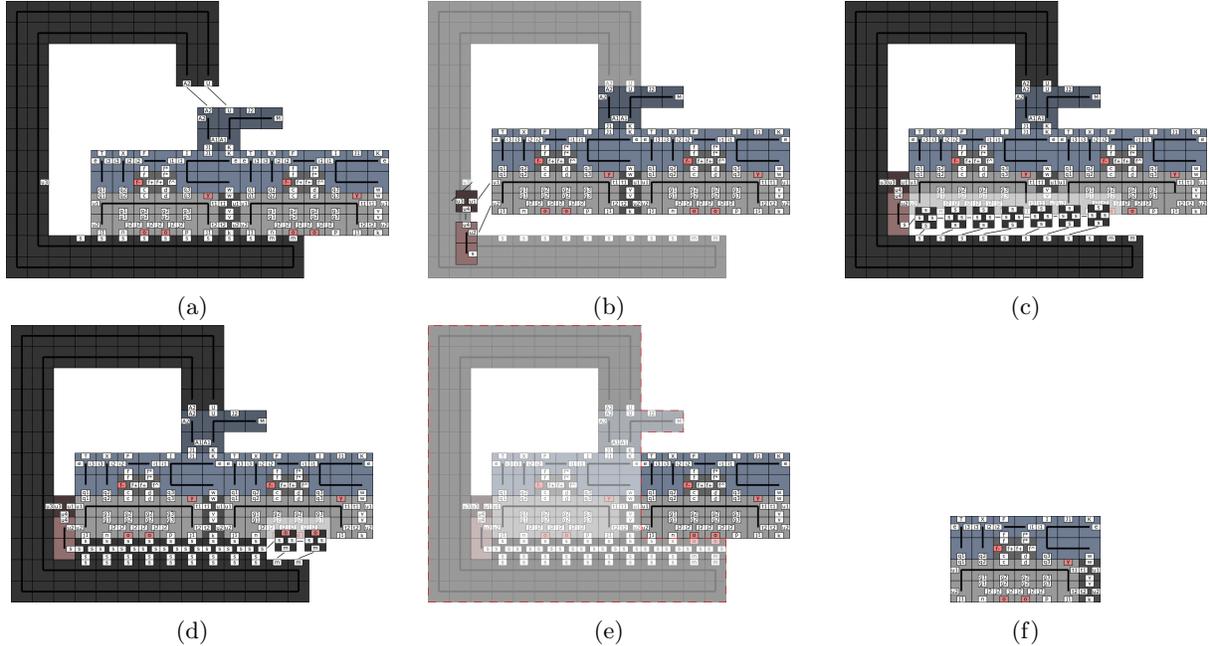

	\centering
	\begin{subfigure}[b]{.3\textwidth}
		\centering
		\includegraphics[scale=2.0]{LineConstruction/line_reduction_1.pdf}
		\caption{}
	\end{subfigure}
	$\quad$
	\begin{subfigure}[b]{.3\textwidth}
		\centering
		\includegraphics[scale=2.0]{LineConstruction/line_reduction_2.pdf}
		\caption{}
	\end{subfigure}
	$\quad$
	\begin{subfigure}[b]{.3\textwidth}
		\centering
		\includegraphics[scale=2.0]{LineConstruction/line_reduction_3.pdf}
		\caption{}
	\end{subfigure}
	\begin{subfigure}[b]{.3\textwidth}
		\centering
		\includegraphics[scale=2.0]{LineConstruction/line_reduction_4.pdf}
		\caption{}
	\end{subfigure}
	$\quad$
	\begin{subfigure}[b]{.3\textwidth}
		\centering
		\includegraphics[scale=2.0]{LineConstruction/line_reduction_5.pdf}
		\caption{}
	\end{subfigure}
	$\quad$
	\begin{subfigure}[b]{.3\textwidth}
		\centering
		\includegraphics[scale=2.0]{LineConstruction/line_reduction_6.pdf}
		\caption{}
	\end{subfigure}
	\caption{(a) The tape reduction gadget attaches to the read-helpers ($A2 + U = 2 + 8 \geq \tau$). (b) The first reduction helpers attach ($u1 + u3 = 8 + 8 \geq \tau, u2 + u4 = 8 + 8 \geq \tau$). (c) Filler tiles attach ($s + s = 8 + 8 \geq \tau$), and create a strong bond to the tape reduction gadget. (d) Two final reduction helpers attach ($s + m + o = 8 + 8 - 5 \geq \tau$). (e) The two negative \textit{o} glues cause a strong targeted destabilization of the previously read tape section ($e + u1 + u2 + o + o = 3 + 8 + 8 - 5 - 5 \leq \tau$). (f) The tape reduction gadget detaches with the used tape section, allowing the next section of the tape to be read.}
	\label{fig:linereduce}
\end{figure}

\paragraph{Tape Reduction.}
After a tape section has been read, we no longer need it. Instead of continuing to grow the assembly, we can remove $O(1)$ size portions of the \emph{tape} as it is being read. This is where the tape reduction gadget (Fig.~\ref{fig:combinedgadgets2}b) initiates the tape reduction process (Fig. \ref{fig:linereduce}) mentioned in Section~\ref{subsec:conover}. The attachments left behind by the read/walk processes allow the tape reduction gadget to attach to a tape section that has already been read. The gadget then removes itself, along with the previously read tape section, exposing the next section of the tape for reading.  This technique is desirable because it allows us to break apart the \emph{tape} into $O(1)$ sized pieces as we use it. As the \emph{tape} is reduced, the \emph{path} continues to grow until there are no more \emph{tape} sections to be read.


\section{Turning Corners} \label{sec:turning}
This section shows the details that allow the \emph{path} to turn left or right during its construction. The process by which the information is extracted and moves along the \emph{path} is identical to that of Section~\ref{sec:lineconstruction}. The key difference is how the information is used once it gets to the end of the \emph{path}. In Section~\ref{sec:lineconstruction}, forward extension and walking gadgets were used to extend the \emph{path}. Here, specific gadgets are used  in order to turn the path left and right. These gadgets are mechanically identical variations of the walking/extension gadgets shown in Section~\ref{sec:lineconstruction}.

\begin{figure}[]
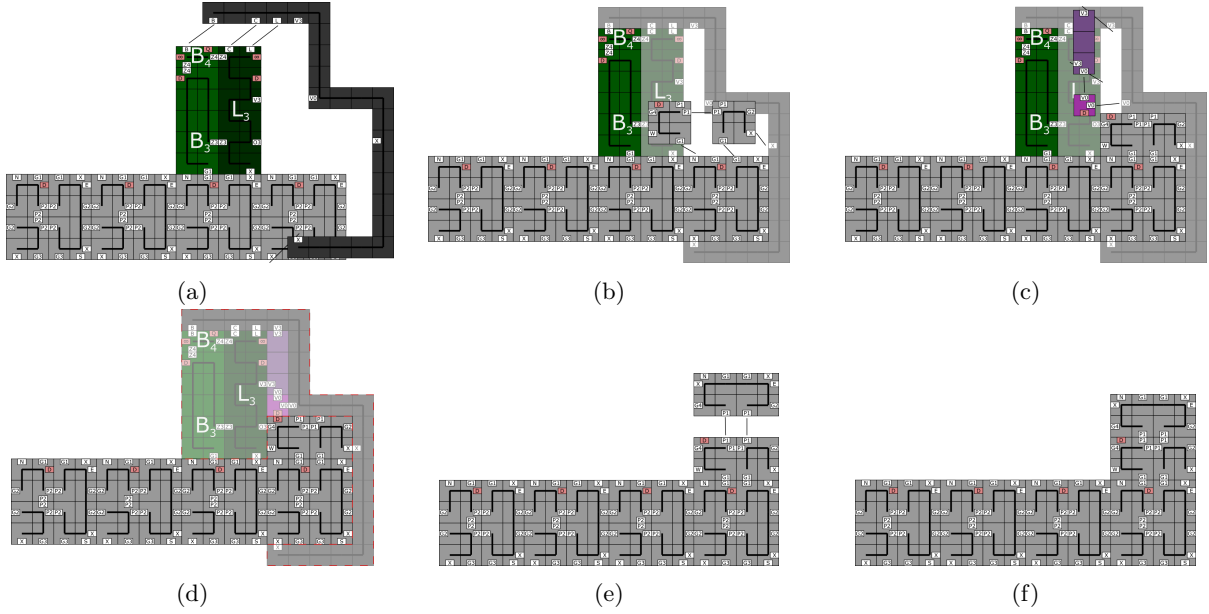

	\centering
	\begin{subfigure}[b]{.3\textwidth}
		\centering
		\includegraphics[scale=2.0]{Turning/extend_left_a1.pdf}
		\caption{}
	\end{subfigure}
	$\quad$
	\begin{subfigure}[b]{.3\textwidth}
		\centering
		\includegraphics[scale=2.0]{Turning/extend_left_a2.pdf}
		\caption{}
	\end{subfigure}
	$\quad$
	\begin{subfigure}[b]{.3\textwidth}
		\centering
		\includegraphics[scale=2.0]{Turning/extend_left_a3.pdf}
		\caption{}
	\end{subfigure}
	\begin{subfigure}[b]{.3\textwidth}
		\centering
		\includegraphics[scale=2.0]{Turning/extend_left_a4.pdf}
		\caption{}
	\end{subfigure}
	$\quad$
	\begin{subfigure}[b]{.3\textwidth}
		\centering
		\includegraphics[scale=2.0]{Turning/extend_left_a5.pdf}
		\caption{}
	\end{subfigure}
	$\quad$
	\begin{subfigure}[b]{.3\textwidth}
		\centering
		\includegraphics[scale=2.0]{Turning/extend_left_a6.pdf}
		\caption{}
	\end{subfigure}
	\caption{(a) The left-extension gadget attaches to the information block and shape path ($B + C + L + X = 3 + 4 + 1 + 2 \geq \tau$). (b) Once the gadget it in place, two path blocks attach ($X + G1 = 2 + 8 \geq \tau, P1 + G1 = 9 + 8 \geq \tau$). (c) The extension-helpers attach ($V3 + V3 = 9 + 9 \geq \tau, V0 + V0 = 9 + 9 \geq \tau $).(d) The negative \textit{D} glue on the second extension-helper causes targeted destabilization. The extension gadget and its helpers, along with the information block and its helpers are no longer bound to the path with sufficient strength. ($X + X + G1 + X + D = 2 + 2 + 8 + 2 - 7 \leq \tau$) (e) Once the unstable subassembly detaches, a second half of the path piece attaches ($P1 + P1 = 9 + 9 \geq \tau$). (f) The final result is a one path-pixel extension of the path to the left of the direction the info block was walking.}
	\label{fig:extendleft}
\end{figure}

\begin{figure}[]
	\centering
	\begin{subfigure}[b]{.3\textwidth}
		\centering
		\includegraphics[scale=2.0]{Turning/walk_left_1.pdf}
		\caption{}
	\end{subfigure}
	$\quad$
	\begin{subfigure}[b]{.3\textwidth}
		\centering
		\includegraphics[scale=2.0]{Turning/walk_left_2.pdf}
		\caption{}
	\end{subfigure}
	$\quad$
	\begin{subfigure}[b]{.3\textwidth}
		\centering
		\includegraphics[scale=2.0]{Turning/walk_left_3.pdf}
		\caption{}
	\end{subfigure}
	\begin{subfigure}[b]{.3\textwidth}
		\centering
		\includegraphics[scale=2.0]{Turning/walk_left_4.pdf}
		\caption{}
	\end{subfigure}
	$\quad$
	\begin{subfigure}[b]{.3\textwidth}
		\centering
		\includegraphics[scale=2.0]{Turning/walk_left_5.pdf}
		\caption{}
	\end{subfigure}
	$\quad$
	\begin{subfigure}[b]{.3\textwidth}
		\centering
		\includegraphics[scale=2.0]{Turning/walk_left_6.pdf}
		\caption{}
	\end{subfigure}
	\caption{The left-walking gadget attaches to the path and the information block ($C + F + W = 4 + 1 + 5 \geq \tau$). (b) A new \textit{forward} information block attaches ($F + O3 + X = 1 + 7 + 2 \geq \tau$). The first walking-helper attaches to the new information block and the path ($G1 + Z3 = 8 + 9 \geq \tau$). (c) The negative interaction between the \textit{D} glues destabilizes the old information block, along with the two walking-helpers($X + G1 + C + F + D = 2 + 8 + 4 + 1 - 7 \leq \tau$). (d) The old information block and walking-helpers break off, and the second walking-helper attaches ($Z4 + Z4 = 9 + 9 \geq \tau$). (e) Once the second walking-helper is attached, the walking gadget becomes unstable due to the negative \textit{Q} glues ($W + O3 + F + Q = 5 + 7 + 1 - 4 \leq \tau$). (f) The walking gadget detaches and leaves the new information block with two walking-helpers attached to the other side of the left turn.}
	\label{fig:walkleft}
\end{figure}

\paragraph{Extend Left.}
The process for extending left is very similar to extending forward. Just as before, the information block must walk to the end of the path before it can be used for extension. The difference now, is the direction of the extension. The \textit{left} information block allows the left-extension gadget to attach and extend the path using the same mechanics as forward-extension. (Fig. \ref{fig:extendleft}). The extension gadget reads the information block, extends the path in the given direction, and then detaches all but the newly extended path.

\paragraph{Walk Left.}
The walk-left procedure utilizes the same mechanics as walking forward, but with slightly different gadgets(Fig. \ref{fig:walkleft}). The information block only allows the correct walking gadget to attach and begin the walking process. Once attached, the walking gadget allows a new information block (of the same type) to attach, while also detaching the the previous info block. After the previous information is removed, the walking gadget detaches as well, allowing the new info block to interact with other gadgets. Thus, the same information has traveled along the left-hand path turn.

\begin{figure}[]
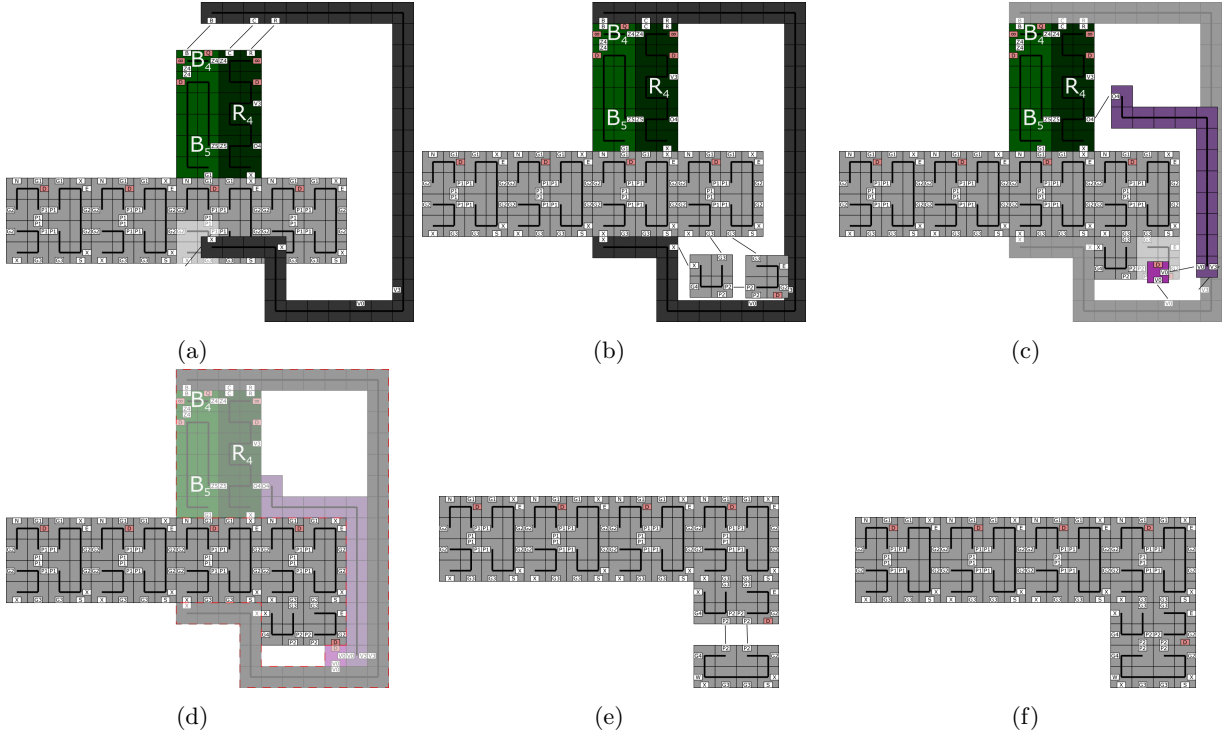

	\centering
	\begin{subfigure}[b]{.3\textwidth}
		\centering
		\includegraphics[scale=2.0]{Turning/extend_right_a1.pdf}
		\caption{}
	\end{subfigure}
	$\quad$
	\begin{subfigure}[b]{.3\textwidth}
		\centering
		\includegraphics[scale=2.0]{Turning/extend_right_a2.pdf}
		\caption{}
	\end{subfigure}
	$\quad$
	\begin{subfigure}[b]{.3\textwidth}
		\centering
		\includegraphics[scale=2.0]{Turning/extend_right_a3.pdf}
		\caption{}
	\end{subfigure}
	\begin{subfigure}[b]{.3\textwidth}
		\centering
		\includegraphics[scale=2.0]{Turning/extend_right_a4.pdf}
		\caption{}
	\end{subfigure}
	$\quad$
	\begin{subfigure}[b]{.3\textwidth}
		\centering
		\includegraphics[scale=2.0]{Turning/extend_right_a5.pdf}
		\caption{}
	\end{subfigure}
	$\quad$
	\begin{subfigure}[b]{.3\textwidth}
		\centering
		\includegraphics[scale=2.0]{Turning/extend_right_a6.pdf}
		\caption{}
	\end{subfigure}
	\caption{(a) The right-extension gadget attaches to the information block ($B + C + R + X = 3 + 4 + 1 + 2 \geq \tau$). (b) Once the gadget it in place, two path blocks attach ($X + G3 = 2 + 8 \geq \tau, P2 + G3 = 9 + 8 \geq \tau$). (c) The extension-helpers attach ($O4 + V3 = 7 + 9 \geq \tau, V0 + V0 = 9 + 9 \geq \tau$).(d) The second extension-helper comes with a negative \textit{D} glue that causes targeted destabilization. The extension gadget and its helpers, along with the information block and its helpers are no longer bound to the path with sufficient strength($X + X + G1 + X + D = 2 + 2 + 8 + 2 - 7 \leq \tau$). (e) Once the unstable assemblies detach, a second half of the path piece attaches ($P2 + P2 = 9 + 9 \geq \tau$). (f) The final result is a one path-pixel extension of the path to the right of the direction the info block was walking.}
	\label{fig:extendright}
\end{figure}

\paragraph{Extend Right.}
The process for extending right is also very similar to extending forward. Just as before, the information block mush walk to the end of the path before it can be used for extension. The difference now, is the direction of the extension. The \textit{right} information block allows the right-extension gadget to attach and extend the path using the same mechanics as forward-extension. (Fig. \ref{fig:extendright}). The extension gadget reads the information block, extends the path in the given direction, and then detaches all but the newly extended path.

\begin{figure}[]
	\centering
	\begin{subfigure}[b]{.3\textwidth}
		\centering
		\includegraphics[scale=2.0]{Turning/walk_right_1.pdf}
		\caption{}
	\end{subfigure}
	$\quad$
	\begin{subfigure}[b]{.3\textwidth}
		\centering
		\includegraphics[scale=2.0]{Turning/walk_right_2.pdf}
		\caption{}
	\end{subfigure}
	$\quad$
	\begin{subfigure}[b]{.3\textwidth}
		\centering
		\includegraphics[scale=2.0]{Turning/walk_right_3.pdf}
		\caption{}
	\end{subfigure}
	\begin{subfigure}[b]{.3\textwidth}
		\centering
		\includegraphics[scale=2.0]{Turning/walk_right_4.pdf}
		\caption{}
	\end{subfigure}
	$\quad$
	\begin{subfigure}[b]{.3\textwidth}
		\centering
		\includegraphics[scale=2.0]{Turning/walk_right_5.pdf}
		\caption{}
	\end{subfigure}
	$\quad$
	\begin{subfigure}[b]{.3\textwidth}
		\centering
		\includegraphics[scale=2.0]{Turning/walk_right_6.pdf}
		\caption{}
	\end{subfigure}
	\caption{The right-walking gadget attaches to the path and the information block ($C + F + W = 4 + 1 + 5 \geq \tau$). (b) A new \textit{forward} information block attaches ($F + O3 + X = 1 + 7 + 2 \geq \tau$). The first walking-helper attaches to the new information block and the path ($G1 + Z3 = 8 + 9 \geq \tau$). (c) The negative interaction between the \textit{D} glues destabilizes the old information block, along with the two walking-helpers ($X + G1 + C + F + D = 2 + 8 + 4 + 1 - 7 \leq \tau$). (d) The old information block and walking-helpers break off, and the second walking-helper attaches ($Z4 + Z4 = 9 + 9 \geq \tau$). (e) Once the second walking-helper is attached, the walking gadget becomes unstable due to the negative \textit{Q} glues($F + O3 + E + Q = 1 + 7 + 5 - 4 \leq \tau$). (f) The walking gadget detaches and leaves the new information block with two walking-helpers attached to the other side of the left turn.}
	\label{fig:walkright}
\end{figure}

\paragraph{Walk Right.}
The walk-right process uses the same mechanics as walking forward or left, but again has slightly altered gadgets (Fig. \ref{fig:walkright}). The information block only allows the correct walking gadget to attach and begin the walking process. Once attached, the walking gadget allows a new information block (of the same type) to attach, while also detaching the the previous info block. After the previous information is removed, the walking gadget detaches as well, allowing the new info block to interact with other gadgets. Thus, the same information has traveled along the right-hand path turn. 


\section{Path Filling} \label{sec:filling}
This section covers the last step in shape construction. As shown in step 7 of the construction overview (Sec.~\ref{subsec:conover}), once the entire \emph{path} has been constructed, the space buffer around the path needs to be filled in. Here, we show the details of the filling process. After the buffer is filled in, the resultant assembly will be a scale 24 version of shape \emph{S}.

\begin{figure}[htp]
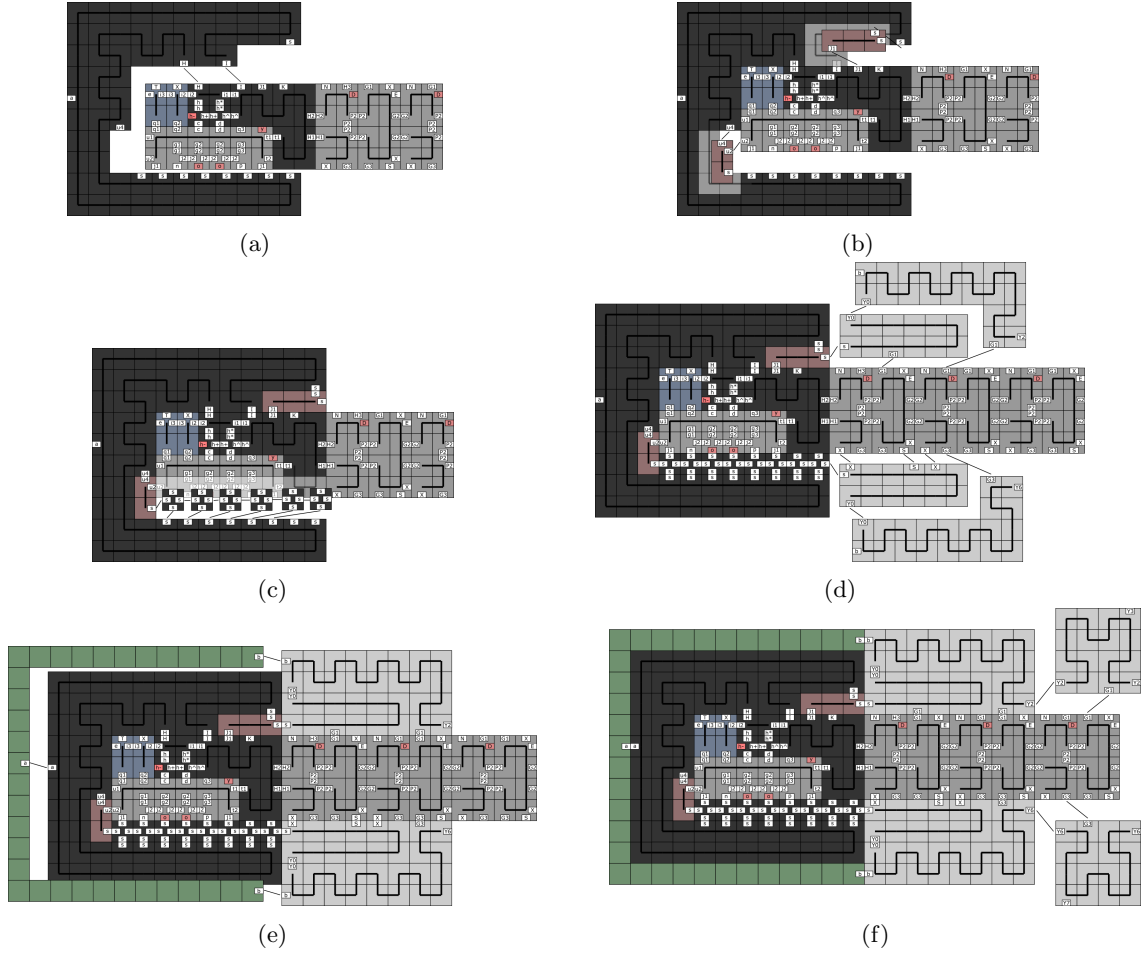

	\centering
	\begin{subfigure}[b]{.45\textwidth}
		\centering
		\includegraphics[scale=2.0]{Filling/fill_lines_1.pdf}
		\caption{}
	\end{subfigure}
	$\quad$
	\begin{subfigure}[b]{.45\textwidth}
		\centering
		\includegraphics[scale=2.0]{Filling/fill_lines_2.pdf}
		\caption{}
	\end{subfigure}
	$\quad$
	\begin{subfigure}[b]{.45\textwidth}
		\centering
		\includegraphics[scale=2.0]{Filling/fill_lines_3.pdf}
		\caption{}
	\end{subfigure}
	\begin{subfigure}[b]{.45\textwidth}
		\centering
		\includegraphics[scale=2.0]{Filling/fill_lines_4.pdf}
		\caption{}
	\end{subfigure}
	$\quad$
	\begin{subfigure}[b]{.45\textwidth}
		\centering
		\includegraphics[scale=2.0]{Filling/fill_lines_5.pdf}
		\caption{}
	\end{subfigure}
	$\quad$
	\begin{subfigure}[b]{.45\textwidth}
		\centering
		\includegraphics[scale=2.0]{Filling/fill_lines_6.pdf}
		\caption{}
	\end{subfigure}
	\caption{(a) The fill initiator gadget attaches to the tape ($H + I = 8 + 5 \geq \tau$). (b) Two fill helpers attack to the gadget/tape ($J1 + s = 8 + 8 \geq \tau, u4 + u2 = 8 + 8 \geq \tau$). (c) Single s-blocks attach ($s + s = 8 + 8 \geq \tau$).(d) The first topside fillers attach ($s + G1 = 8 + 8 \geq \tau, Y0 + G1 = 9 + 8 \geq \tau$), as well as the underside fillers ($X + S + X = 2 + 5 + 2 \geq \tau, Y0 + G3 = 9 + 8 \geq \tau$) (e) The filler cap attaches ($b + b + a = 3 + 3 + 4 \geq \tau$). (f) The topside and underside line filler blocks continue to attach ($Y2 + G1 = 9 + 8 \geq \tau, Y6 + G3 = 9 + 8 \geq \tau$).}
	\label{fig:linefill}
\end{figure}

\paragraph{Fill Lines.}
After the entire \emph{path} has been built, all previous tape sections will have been read/removed, save for one. The fill initiator gadget then attaches to the final tape section (Fig.~\ref{fig:linefill}), and begins the fill process. A series of cooperative attachments flood the sides (above and below) the path we've constructed. The initiator gadget, as well as the final tape section, remain to become the first pixel of the shape.

\begin{figure}[htp]
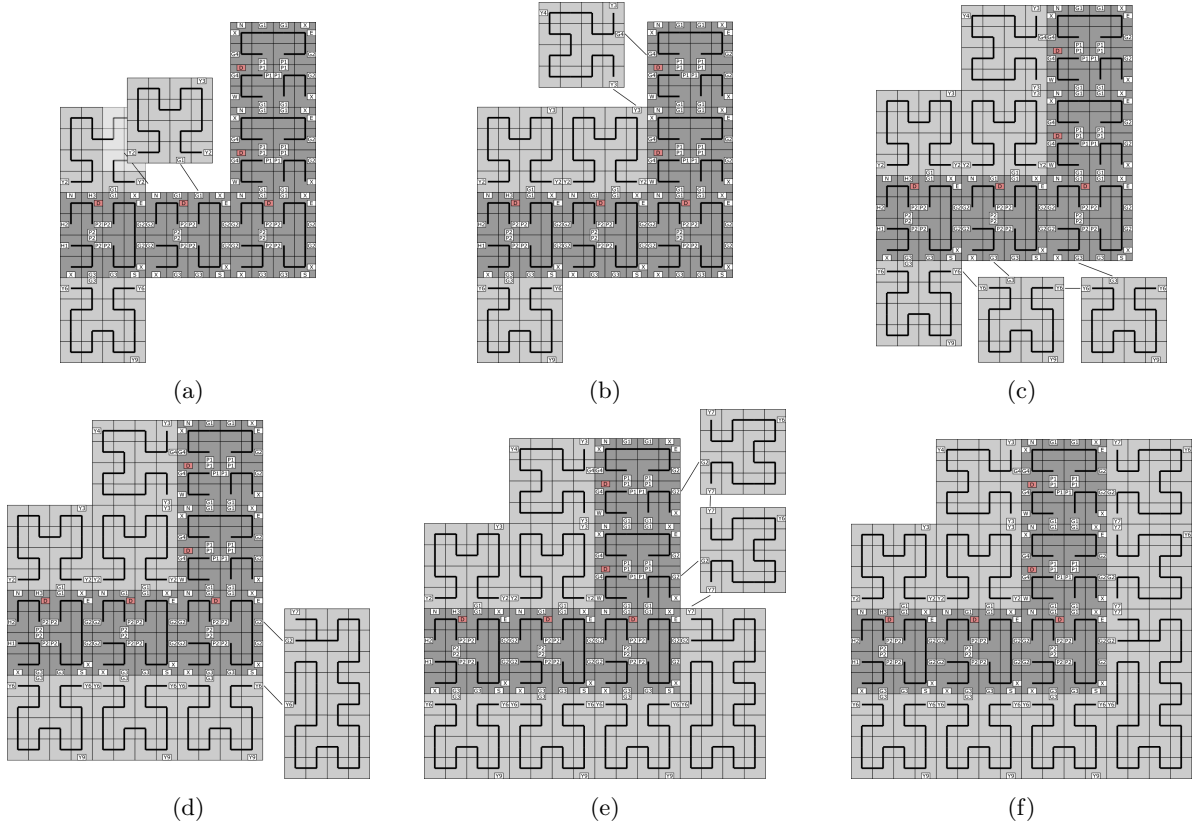

	\centering
	\begin{subfigure}[b]{.3\textwidth}
		\centering
		\includegraphics[scale=2.0]{Filling/fill_left_1.pdf}
		\caption{}
	\end{subfigure}
	$\quad$
	\begin{subfigure}[b]{.3\textwidth}
		\centering
		\includegraphics[scale=2.0]{Filling/fill_left_2.pdf}
		\caption{}
	\end{subfigure}
	$\quad$
	\begin{subfigure}[b]{.3\textwidth}
		\centering
		\includegraphics[scale=2.0]{Filling/fill_left_3.pdf}
		\caption{}
	\end{subfigure}
	\begin{subfigure}[b]{.3\textwidth}
		\centering
		\includegraphics[scale=2.0]{Filling/fill_left_4.pdf}
		\caption{}
	\end{subfigure}
	$\quad$
	\begin{subfigure}[b]{.3\textwidth}
		\centering
		\includegraphics[scale=2.0]{Filling/fill_left_5.pdf}
		\caption{}
	\end{subfigure}
	$\quad$
	\begin{subfigure}[b]{.3\textwidth}
		\centering
		\includegraphics[scale=2.0]{Filling/fill_left_6.pdf}
		\caption{}
	\end{subfigure}
	\caption{(a) The northern topside fillers attach until they encounter a left corner ($Y2 + G1 = 9 + 8 \geq \tau$). (b) The western topside filler attaches ($Y3 + G4 = 9 + 8 \geq \tau$). (c) The southern underside fillers attach until they reach a corner as well ($Y6 + G3 = 9 + 8 \geq \tau$).(d) The underside south-east filler attaches ($Y6 + G2 = 9 + 8 \leq \tau$) (e) The eastern underside fillers attach ($Y7 + G2 = 9 + 8 \geq \tau$). (f) A completely filled left turn.}
	\label{fig:leftfill}
\end{figure}

\paragraph{Fill Left.}
The filler blocks continue attaching until they encounter a corner (Fig.~\ref{fig:leftfill}). For left turns, the topside fillers encounter a concave corner, while the underside fillers encounter a convex corner. The design of the filler blocks allows them to simply transition from one block type to the next for concave corners. Convex corners, however, require a filler transition block to start filling in the new direction. Again, there are unique sets of filler blocks for filling along the topside and underside of the \emph{path}.

\begin{figure}[htp]
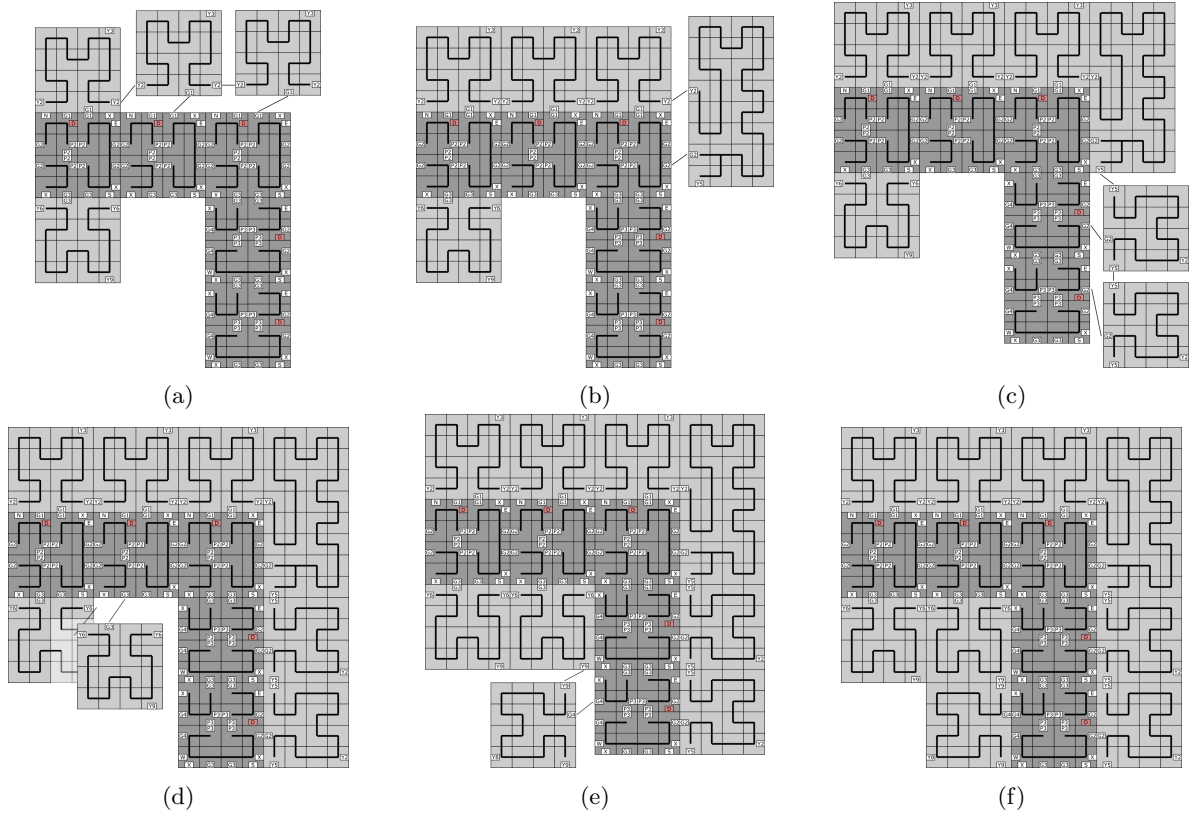

	\centering
	\begin{subfigure}[b]{.3\textwidth}
		\centering
		\includegraphics[scale=2.0]{Filling/fill_right_1.pdf}
		\caption{}
	\end{subfigure}
	$\quad$
	\begin{subfigure}[b]{.3\textwidth}
		\centering
		\includegraphics[scale=2.0]{Filling/fill_right_2.pdf}
		\caption{}
	\end{subfigure}
	$\quad$
	\begin{subfigure}[b]{.3\textwidth}
		\centering
		\includegraphics[scale=2.0]{Filling/fill_right_3.pdf}
		\caption{}
	\end{subfigure}
	\begin{subfigure}[b]{.3\textwidth}
		\centering
		\includegraphics[scale=2.0]{Filling/fill_right_4.pdf}
		\caption{}
	\end{subfigure}
	$\quad$
	\begin{subfigure}[b]{.3\textwidth}
		\centering
		\includegraphics[scale=2.0]{Filling/fill_right_5.pdf}
		\caption{}
	\end{subfigure}
	$\quad$
	\begin{subfigure}[b]{.3\textwidth}
		\centering
		\includegraphics[scale=2.0]{Filling/fill_right_6.pdf}
		\caption{}
	\end{subfigure}
	\caption{(a) The northern topside fillers attach until they encounter a right corner ($Y2 + G1 = 9 + 8 \geq \tau$). (b) The north-east topside filler attaches ($Y2 + G2 = 9 + 8 \geq \tau$). (c) The eastern topside fillers attach ($Y5 + G2 = 9 + 8 \geq \tau$).(d) The south underside fillers attach until they reach a corner as well ($Y6 + G2 = 9 + 8 \leq \tau$) (e) The western underside fillers attach ($Y9 + G4 = 9 + 8 \geq \tau$). (f) A completely filled right turn.}
	\label{fig:rightfill}
\end{figure}

\paragraph{Fill Right.}
The right-fill process is essentially a reflection of the left-turn process (Fig.~\ref{fig:rightfill}). Here, the topside fillers encounter the convex corner, and the underside fillers encounter the concave corner. Both filler types are designed to flood their respective sides of the \emph{path}. 

\section{Constant Scaled Shapes} \label{sec:shapes}
In this section, we formally state the results based on our construction.

\begin{theorem}\label{thm:main}
  For any finite connected shape $S$, there exists a 2HAM system $\Gamma=(T_{S},10)$ that uniquely produces $S$ (with $O(1)$ size bounded garbage) at a $O(1)$ scale factor, and $|T_{S}| = O( \frac{K(S)}{\log K(S)})$.
\end{theorem}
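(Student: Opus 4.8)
The plan is to build $T_S$ as the disjoint union of two parts: a fixed, $S$-independent collection of gadget tiles (the overlay, read, walk, extend, reduce, turning, and filling tiles of Sections~\ref{sec:lineconstruction}--\ref{sec:filling}, together with the tiles that simulate the two Turing machines via the fuel-efficient construction of~\cite{SS2013FEC}), and an $S$-dependent \emph{seed} encoding the Kolmogorov-optimal description of $S$. The fixed part has $O(1)$ tile types. The first task is to argue that the intended assembly sequence actually runs: the seed grows into the full fuel-efficient tape, the base-conversion machine rewrites the input into binary, the spanning-tree/path machine emits the list of forward/left/right instructions, and then the path-building gadgets fire in order to trace out the scale-$24$ perimeter path before the fill gadgets pad it into a scaled copy of $S$. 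Correctness of the two simulations is inherited from~\cite{SS2013FEC}; what remains is to verify that each geometric gadget performs its single intended reaction.

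The core of the argument is \textbf{unique assembly with $O(1)$ garbage}, which is delicate precisely because negative glues permit breaks. I would prove a loop invariant describing the unique ``live'' assembly after each instruction is processed: the overlay is intact over the unread suffix of the tape, exactly one information block is in transit, and the path faithfully encodes the prefix of instructions consumed so far. For each gadget I would check both directions of every cut listed in the figure captions of Sections~\ref{sec:lineconstruction}--\ref{sec:turning}, confirming (i) that attachment strengths meet $\tau=10$ so the intended combination occurs, and (ii) that a negative-glue cut drops below $\tau$ only along the intended seam, so the detached piece is one of the $O(1)$-size gadget/helper assemblies. Because glue labels are chosen so that each gadget exposes its activating glues only in its intended local context, no gadget can fire out of sequence; any spuriously combinable pair either reproduces an intended step or remains an $O(1)$ fragment with a forward growth path back toward the shape, which is exactly what the bounded-garbage definitions of Section~\ref{sec:definitions} require.

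The main obstacle I anticipate is ruling out \emph{stuck} or \emph{interfering} assemblies rather than verifying the happy path: one must show that detached garbage cannot re-attach productively elsewhere, that two information blocks never coexist on the path, and that a break intended to shed a helper never propagates to sever a load-bearing infinite-strength bond. I would handle this by partitioning producible assemblies according to which gadget phase they belong to and showing each phase is confluent toward the next, using the fact that every structural bond drawn as a solid black line has strength $\gg \tau$ and therefore survives every negative-glue cut. Terminality of the final filled assembly then follows because no exposed glue of strength $\geq \tau$ remains.

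Finally, for \textbf{scale and tile complexity}, the scale factor is the explicit constant $24$ fixed by the gadget geometry. The gadget tiles number $O(1)$, so $|T_S|$ is dominated by the seed. Encoding the description directly in binary would cost $\Theta(K(S))$ seed tiles; instead, following the packing idea of~\cite{SolWin07} and detailed in Section~\ref{sec:baseConversion}, I would encode the description in base $b \approx K(S)/\log K(S)$, so that $O(K(S)/\log K(S))$ digit-tiles carry all $K(S)$ bits and the base-conversion machine unpacks them to binary before the path machine runs. This yields $|T_S| = O(K(S)/\log K(S))$, matching the lower bound established in Section~\ref{sec:lowerBound} and completing the proof.
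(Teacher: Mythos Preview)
Your proposal is correct and follows essentially the same approach as the paper: partition $T_S$ into the seed encoding, the fuel-efficient Turing-machine tiles from~\cite{SS2013FEC}, and the $O(1)$ path-building and filling gadgets, then argue unique assembly with $O(1)$ garbage and invoke the base-$b$ packing of Section~\ref{sec:baseConversion} for the tile bound. One small accounting slip: in the paper's construction the base-conversion Turing machine has $O(b)$ transition rules with $b=\Theta(K(S)/\log K(S))$, so those tiles are \emph{not} $S$-independent and already contribute $O(K(S)/\log K(S))$ types alongside the seed; this does not change the final bound, but your claim that the entire Turing-machine portion is a fixed $O(1)$ set should be adjusted.
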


\begin{proof}
We show this by constructing a 2HAM system $\Gamma=(T_{S},10)$. One portion of $T_{S}$ consists of the tile types which assemble a higher base Kolmogorov-optimal description of \emph{S}, discussed in Section~\ref{sec:baseConversion}.  This portion of $T_{S}$ consists of $O( \frac{K(S)}{\log K(S)})$ tile types, as analyzed in Section~\ref{sec:baseConversion}. Another portion of $T_{S}$ consists of the tile types needed to assemble a fuel-efficient Turing machine, as described by \cite{SS2013FEC}, that performs a simple base conversion to binary using $O( \frac{K(S)}{\log K(S)})$ tile types, as analyzed in Section~\ref{sec:baseConversion}.  The next portion of $T_{S}$ consists of the tile types required to assemble another fuel-efficient Turing machine that finds and outputs the description of a path around the spanning tree of \emph{S}.  This Turing machine is of $O(1)$ size, and thus adds $O(1)$ tile types using the method from~\cite{SS2013FEC}. The final portion of $T_{S}$ consists of the tile types that construct the gadgets and assemblies shown in Section~\ref{sec:lineconstruction}. With the number of tile types used for computing the \emph{path} description and for our construction process being $O(1)$, our final tile complexity is $O(\frac{K(S)}{\log K(S)})$.

Now, consider assembly \emph{A} to be the fully constructed \emph{tape} assembly (detailed in Section~\ref{sec:lineconstruction}) encoded with \emph{path}-building instructions specific to \emph{S}. Also, consider assembly \emph{B} to be some \emph{terminal} assembly that has shape \emph{S} at a constant scale factor.

Note that $\Gamma$ follows the process detailed in Section~\ref{sec:lineconstruction}. This system was designed so that two assemblies are \emph{combinable} only if at least one of those assemblies is $O(1)$ size, and every \emph{breakable} assembly can only break into two subassemblies if one of those assemblies is $O(1)$ size. In our construction, the only non-constant size assemblies are \emph{A}, \emph{B}, or some intermediate assembly that consists of some portion of the \emph{tape}, and some partially assembled section of the final shape. Of these, \emph{B} is the only terminal assembly.

While \emph{A} and the intermediate assemblies continue engaging in a series of attachments and detachments, the \emph{tape} continues to get smaller and the \emph{path} continues to grow. The attachment and detachment of $O(1)$ size pieces with these assemblies will continue until we reach the terminal assembly \emph{B}, at which time \emph{A} will have been disassembled into smaller constant garbage. Therefore, we see that $A \rightarrow^{\Gamma} B$.
\end{proof}

\section{Lower Bound}\label{sec:lowerBound}

Here we present a brief argument for the lower bound of $\Omega(\frac{K(S)}{\log K(S)})$ on the tile types needed to assemble a scaling of a shape $S$.  This argument is essentially the same as what is presented in~\cite{AGKS05g,RotWin00,SolWin07}, and we refer the reader there for a more detailed explanation.

\begin{theorem}
The tile complexity in the 2HAM for self-assembling a scale-$c$ version of a shape $S$ at constant temperature and constant garbage is $\Omega(\frac{K(S)}{\log K(S)})$.
\end{theorem}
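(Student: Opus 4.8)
The plan is to prove the contrapositive counting bound: any 2HAM system $\Gamma=(T,\tau)$ that self-assembles a scale-$c$ version of $S$ (with $O(1)$ garbage, at constant temperature) can be converted into a short \emph{program} for $S$, so that $K(S)$ is controlled by $|T|$. Concretely, I would establish $K(S) = O(|T| \log |T|)$, and then invert this to obtain $|T| = \Omega(K(S)/\log K(S))$, as claimed. Throughout I may assume that $K(S)$ (equivalently $|S|$) exceeds any fixed constant, since otherwise the bound is vacuous.

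First I would fix, once and for all, a single universal Turing machine program $P$ that reads an encoding of a 2HAM system together with the constants $\tau$, $c$, and the garbage bound, and is hard-wired with the decoding and descaling conventions. Set $n = |T|$. Each tile has four edges, each labeled by one of at most $4n$ glues and carrying a strength that (by a standard normalization) may be taken to lie in a bounded range; hence every tile type is describable in $O(\log n)$ bits and the full tile set $T$ in $O(n \log n)$ bits, while $\tau$, $c$, and the garbage constant contribute only $O(1)$ bits. Thus the total input fed to $P$ has length $O(n \log n)$.

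Next I would argue that $P$ can recover the pixels of $S$ from this input. Because $\Gamma$ uniquely assembles $S$ with $O(1)$ garbage, there is a well-defined terminal assembly $B$ whose shape is the scale-$c$ copy of $S$, and every producible assembly $A$ with $|A|$ at least the garbage constant satisfies $A \rightarrow^{\Gamma} B$. The program $P$ enumerates producible assemblies by repeatedly applying the combination and break rules starting from $T$, using the supplied strengths and temperature to test $\tau$-combinability and $\tau$-breakability exactly as in the model definitions; it retains the terminal assembly that exceeds the garbage size, reads off its occupied coordinates, divides by $c$, and prints the resulting list of positions of $S$. This witnesses $K(S) \le O(n \log n)$, and inverting the inequality by the usual manipulation (if $K(S) \le a\,n \log n$ then $n \ge K(S)/(a \log n)$ with $\log n = O(\log K(S))$) yields $n = \Omega(K(S)/\log K(S))$.

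The main obstacle I anticipate is \emph{not} the counting but verifying that the simulation inside $P$ is correct and halts in the presence of negative glues and detachment. Unlike the monotone aTAM argument of the cited works, assemblies here can break apart, so the producibility search must track break reactions and cannot simply grow a single seed to a fixed point; an unbounded search risks non-termination. I would resolve this by leaning on the \emph{unique assembly with bounded garbage} hypothesis, which guarantees a finite, shape-deterministic target $B$ of a priori known size, letting $P$ cap its search at that size and distinguish the unique large terminal assembly from the $O(1)$-sized garbage precisely because $|B|$ exceeds the garbage constant in the nontrivial regime. This is exactly the point at which the negative-glue dynamics must be handled with care rather than invoked as a black box, and where the present argument departs from the earlier aTAM and positive-glue 2HAM lower bounds.
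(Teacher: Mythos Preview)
Your proposal is correct and follows essentially the same approach as the paper: encode the 2HAM system in $O(|T|\log|T|)$ bits, feed it to a fixed simulator that outputs the positions of the uniquely produced scaled shape, conclude $K(S)=O(|T|\log|T|)$, and invert. The paper's proof is in fact briefer than yours---it simply posits the existence of such a simulator without discussing termination in the presence of breaks---so your attention to the halting issue under negative-glue dynamics goes beyond what the paper provides, though your resolution (capping the search by an ``a priori known size'' of $B$) would still need a bit more care, since that size is not directly given by the input.
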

\begin{proof}
Note that a 2HAM system $\Gamma = (T,\tau=O(1))$ can be uniquely represented with a string of $O(|T|\log |T|)$ bits.  In particular, each tile may be encoded as a list of its 4 glues, and each glue may be represented by a $O(\log |T|)$-bit string taken from an indexing of the maximum possible $4|T|$ distinct glue types of the system.  The constant bounded temperature incurs an additional additive constant.  Given this representation, consider a 2HAM simulation program that inputs a 2HAM system, and outputs the positions of any uniquely produced scale-$c$ shape (with up to $O(1)$ garbage), if one exists.  This simulator, along with the $O(|T|\log |T|)$ bit encoding of a system $\Gamma$ which assembles $S$ at scale $c$, constitute a program which outputs the positions of $S$, and is thus lower bounded in bits by $K(S)$.  Therefore $K(S) \leq d|T|\log |T|$ for some constant $d$, implying $T = \Omega(\frac{K(S)}{\log K(S)})$.
\end{proof} 


\section{Extension to $\frac{K(S)}{\log K(S)}$}\label{sec:baseConversion}

The starting assembly for our shape construction algorithm is the \emph{tape} assembly from~\cite{SS2013FEC} with a binary string as its value.  For a binary string $A = a_0 \ldots a_{k-1}$, such an assembly can be constructed in a straightforward manner using $O(k)$ tile types (simply place a distinct tile for each position in the assembly, for example).  However, by using a base conversion trick, we can take advantage of the fact that each tile type is asymptotically capable of representing slightly more than 1 bit in order to build the string in $O(k/\log k)$ tile types.  To achieve this, first we consider the base-$b$ representation $B = b_0 \ldots b_{d-1}$ of the string $A$ for some higher base $b > 2$. Note that the number of digits of this string is $d \leq \lceil \frac{k}{\lfloor \log_2 b \rfloor}\rceil = O(\frac{k}{\log b})$.  We are able to assemble this shorter string (by brute force with distinct tile types at each position) with only $O(d)$ tile types.

Next, we consider a Turing machine which converts any base $b$ string into its equivalent base $2$ representation.  Such a Turing machine can be constructed using $O(b)$ transition rules.  Therefore, we can apply the result of~\cite{SS2013FEC} to run this Turing machine on the initial tape assembly representing string $B$ to obtain string $A$.  The cost of this construction in total is $O(d)$ tiles to construct the initial tape assembly, plus $O(b)$ tiles to implement the rules of the conversion Turing machine\footnote{The formal theorem statement of~\cite{SS2013FEC} cites the product of the states and symbols of the Turing machine as the tile type cost.  However, the actual cost is the number of transition rules, which is upper bounded by this product.}, for a total of $O(d+b)$ tiles.

Finally, we select $b=\lceil\frac{k}{\log k}\rceil = O(\frac{k}{\log k})$, which yields $d=O(\frac{k}{\log k - \log\log k}) = O(\frac{k}{\log k})$, implying that the entire tile cost of setting up the initial tape assembly representing binary string $B$ is $O(b+d) = O(\frac{k}{\log k})$ tile types.  In our case $k= O(K(S))$ where $K(S)$ denotes the Kolmogorov complexity of shape $S$ for some given universal Turing machine, and so we achieve our final tile complexity of $O( \frac{K(S)}{\log K(S)})$. 

\section{Conclusion} \label{sec:conclusion}
In this work, we considered the optimal shape building problem in the negative glue 2-handed assembly model, and provided a system of construction that allows the self-assembly of general shapes at scale 24. Shape construction has been studied in more powerful self-assembly models such as the staged RNA assembly model and the chemical reaction network-controlled tile assembly model. However, our result constitutes the first example of optimal general shape construction at constant scale in a \emph{passive} model of self-assembly where no outside experimenter intervention is required, and system monomers are state-less, static pieces which interact solely based on the attraction and repulsion of surface chemistry.

Our work opens up a number of directions for future work. We have not considered a runtime model for this construction, so analyzing and improving the \emph{running time} for constant-scaled shape self-assembly in the 2-handed assembly is one open direction.

Another is determining the lowest necessary temperature and glue strengths needed for $O(1)$ scale shape construction. We use temperature value 10 for the sake of clarity, and have not attempted to optimize this value. To help such optimization, we have included a table (Table~\ref{tbl:inequalities}) of the inequality specifications for every step in our construction process.

\begin{table}[H]
\centering\footnotesize
\begin{tabular}{ r r }
\hline
\multicolumn{2}{|c|}{Tape Overlay} \\
\hline
$t + $ \textit{h\^} $\geq \tau$ & $h + h^* + i \geq \tau$ \\
$i + q \geq \tau$ & $f + f^* + i \geq \tau$ \\
$e + w + q \geq \tau$ \\

\hline
\multicolumn{2}{|c|}{Tape Read} \\
\hline
$J + A \geq \tau$ & $n + T + F \geq \tau$ \\
$K + M \geq \tau$ & $J + F + F \geq \tau$ \\
$F + F + J + A + A + Q \geq \tau$ & $J + J + Q \geq \tau$ \\
$F + F + M + K + J + Q \geq \tau$ & $A + A + O \geq \tau$ \\
$F + F + M + n + T + F + Q < \tau$ \\

\hline
\multicolumn{2}{|c|}{Information Walk} \\
\hline
$F + F + J \geq \tau$ & $F + O + X \geq \tau$ \\
$J + Z + D \geq \tau$ & $F + O + J + D < \tau$ \\
$Z + Z + D \geq \tau$ & $Z + Z + J + D \geq \tau$ \\
$F + O + X + J + D \geq \tau$ & $J + A + A + F + F + D < \tau$ \\
$M + K + J + F + F + D \geq \tau$ & $J + A + J + F + F + D \geq \tau$ \\
$M + K + A + A + A + F + F + D \geq \tau$ & $J + O + J + F + F + D \geq \tau$ \\

\hline
\multicolumn{2}{|c|}{Path Extend} \\
\hline
$V + V + D \geq \tau$ & $H + X \geq \tau$ \\
$O + V + V + D \geq \tau$ & $V + O \geq \tau$ \\
$B + C + F + p \geq \tau$ & $H + P \geq \tau$ \\
$X + p + J + X + D < \tau$ & $P + P \geq \tau$ \\

\hline
\multicolumn{2}{|c|}{Tape Reduction} \\
\hline
$A + U \geq \tau$ & $u + u \geq \tau$ \\
$s + m + o \geq \tau$ & $s + s \geq \tau$ \\
$u + u + e + o + o < \tau$ \\ 

\hline
\multicolumn{2}{|c|}{Extend Left} \\
\hline
$V + V + D \geq \tau$ & $G + X \geq \tau$ \\
$B + C + L + X \geq \tau$ & $G + P \geq \tau$ \\
$X + X + G + X + D < \tau$ & $P + P \geq \tau$ \\

\hline
\multicolumn{2}{|c|}{Walk Left} \\
\hline
$F + O + X \geq \tau$ & $C + F + W \geq \tau$ \\
$F + O + W + Q < \tau$ & $Z + Z + Q \geq \tau$ \\
$C + F + X + G + D < \tau$ & $G + Z + D \geq \tau$ \\

\hline
\multicolumn{2}{|c|}{Extend Right} \\
\hline
$V + V + D \geq \tau$ & $V + O \geq \tau$ \\
$B + C + R + X \geq \tau$ & $P + G \geq \tau$ \\
$X + X + G + X + D < \tau$ & $X + G \geq \tau$ \\

\hline
\multicolumn{2}{|c|}{Walk Right} \\
\hline
$F + O + X \geq \tau$ & $C + F + E \geq \tau$ \\
$F + O + E + Q < \tau$ & $G + Z + D \geq \tau$ \\
$C + F + X + G + D < \tau$ & $Z + Z + Q \geq \tau$ \\

\hline
\multicolumn{2}{|c|}{Initial Fill} \\
\hline
$H + I \geq \tau$ & $u + u \geq \tau$ \\
$Y + G \geq \tau$ & $s + s \geq \tau$ \\
$b + b + a \geq \tau$ & $J + s \geq \tau$ \\
$s + X + S + X \geq \tau$ & $s + G \geq \tau$ \\

\hline
\multicolumn{2}{|c|}{Fill Forward, Left, Right} \\
\hline
$Y + G \geq \tau$ \\


\end{tabular}
\caption{Shown are the inequalities which must be satisfied for the construction gadgets shown in Section~\ref{sec:lineconstruction} to function in the way required to prove Theorem~\ref{thm:main}. All single glue labels  must have strength $< \tau$. Unless otherwise stated, in these inequalities, a glue label $G$ represents all glues $G*$ (e.g., G1, G2, etc.).}
\label{tbl:inequalities}
\end{table}

%

\bibliographystyle{amsplain}
\bibliography{tam}

\newpage
\appendix

\section{Appendix}

\subsection{Special Case Gadgets} \label{app:specialgadgets}
\begin{figure}[H]
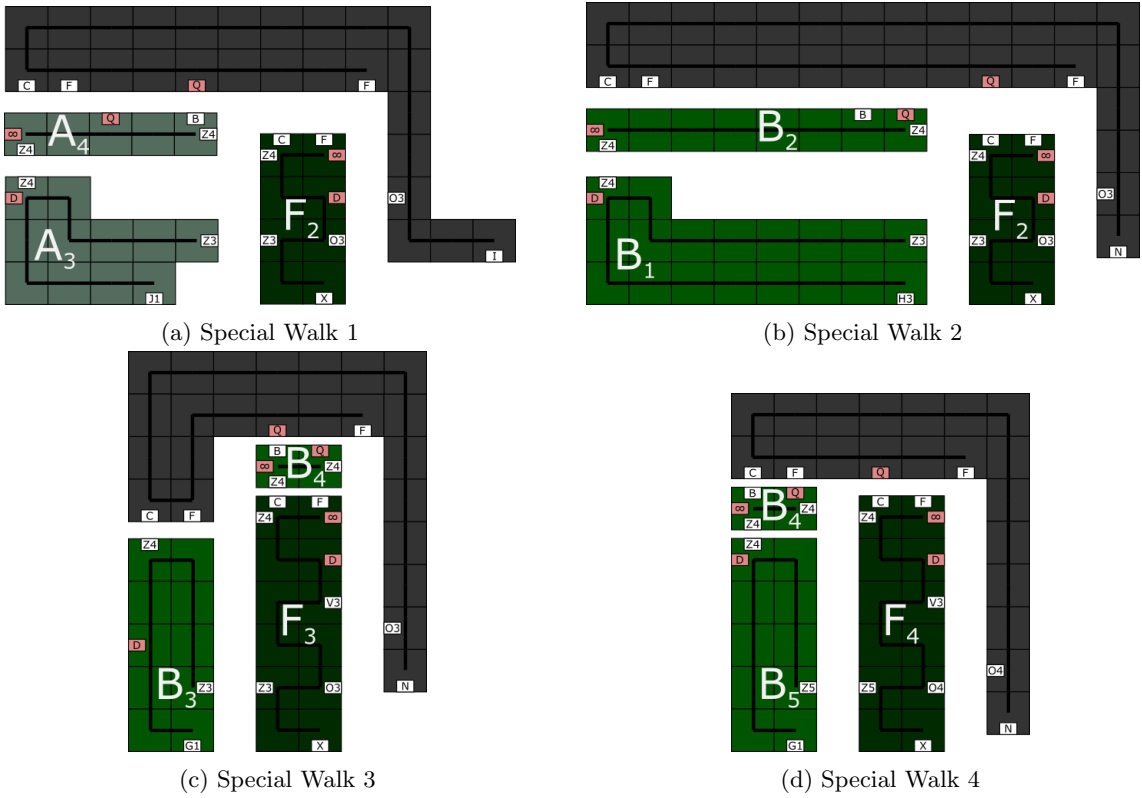


	\centering
	\begin{subfigure}[b]{.45\textwidth}
		\centering
		\includegraphics[scale=4.0]{ConstructionGadgets/Extras/special_walk_gadget_1.pdf}
		\caption{Special Walk 1}
	\end{subfigure}
	$\quad$
	\begin{subfigure}[b]{.45\textwidth}
		\centering
		\includegraphics[scale=4.0]{ConstructionGadgets/Extras/special_walk_gadget_2.pdf}
		\caption{Special Walk 2}
	\end{subfigure}
    $\quad$
	\begin{subfigure}[b]{.45\textwidth}
		\centering
		\includegraphics[scale=4.0]{ConstructionGadgets/Extras/special_walk_gadget_3.pdf}
		\caption{Special Walk 3}
	\end{subfigure}
    $\quad$
	\begin{subfigure}[b]{.45\textwidth}
		\centering
		\includegraphics[scale=4.0]{ConstructionGadgets/Extras/special_walk_gadget_4.pdf}
		\caption{Special Walk 4}
	\end{subfigure}
	\caption{These are the special-case walking gadgets. Like the standard gadget, each special-case also has specific versions for the different instructions that are carried by the information blocks. While slightly different, all walking gadgets utilize the same mechanics shown in Section~\ref{sec:lineconstruction}. (a-d) The walking gadget (dark grey) has versions specific to the different information types. The \emph{F} glues indicate that these walkers will be walking an \emph{F}-type instruction down the path. The negative \emph{Q} glue allows the gadget to be detached once the information block has walked one step. The dark green assembly is the corresponding information block. The two other helpers (olive drab/ light green) remove the previous info block/helpers and the walking gadget.}
	\label{fig:specialwalkgadgets}
\end{figure}

\noindent \textbf{Special Case Walking Gadgets}
All walking gadgets are mechanically identical to the one described in Section~\ref{sec:lineconstruction}. The glues are unique to allow special versions of information blocks and helpers to attach, but the process is the same. The walkers attach to a previously placed information block, allow the attachment of a new information block (with helpers), and detach themselves and the previous information block. The changes here are due to \emph{what} they are walking on, be it the \emph{tape} or the \emph{path}.

\textbf{Special Walk 1 \& 2}
The first special walking gadget (Fig~\ref{fig:specialwalkgadgets}a) is used for all subsequent steps along the \emph{tape}, after the standard walking gadget (Sec~\ref{sec:lineconstruction}) has executed the information block's initial step. The second special walking gadget (Fig~\ref{fig:specialwalkgadgets}b) allows the information block to transition from walking on the \emph{tape}, to walking on the \emph{path}. This gadget is required because single \emph{tape} and \emph{path} sections differ in size and glue types.

\textbf{Special Walk 3 \& 4}
Once an information block has transitioned from the \emph{tape} to the \emph{path}, the third special walking gadget (Fig~\ref{fig:specialwalkgadgets}c) allows it to take an initial step on the \emph{path}. The fourth special walking gadget (Fig~\ref{fig:specialwalkgadgets}d) allows the information block to continue walking along the \emph{path}. These gadgets are required because a single \emph{path} section is shorter than a single \emph{tape} section, which these gadgets account for.

\begin{figure}[]
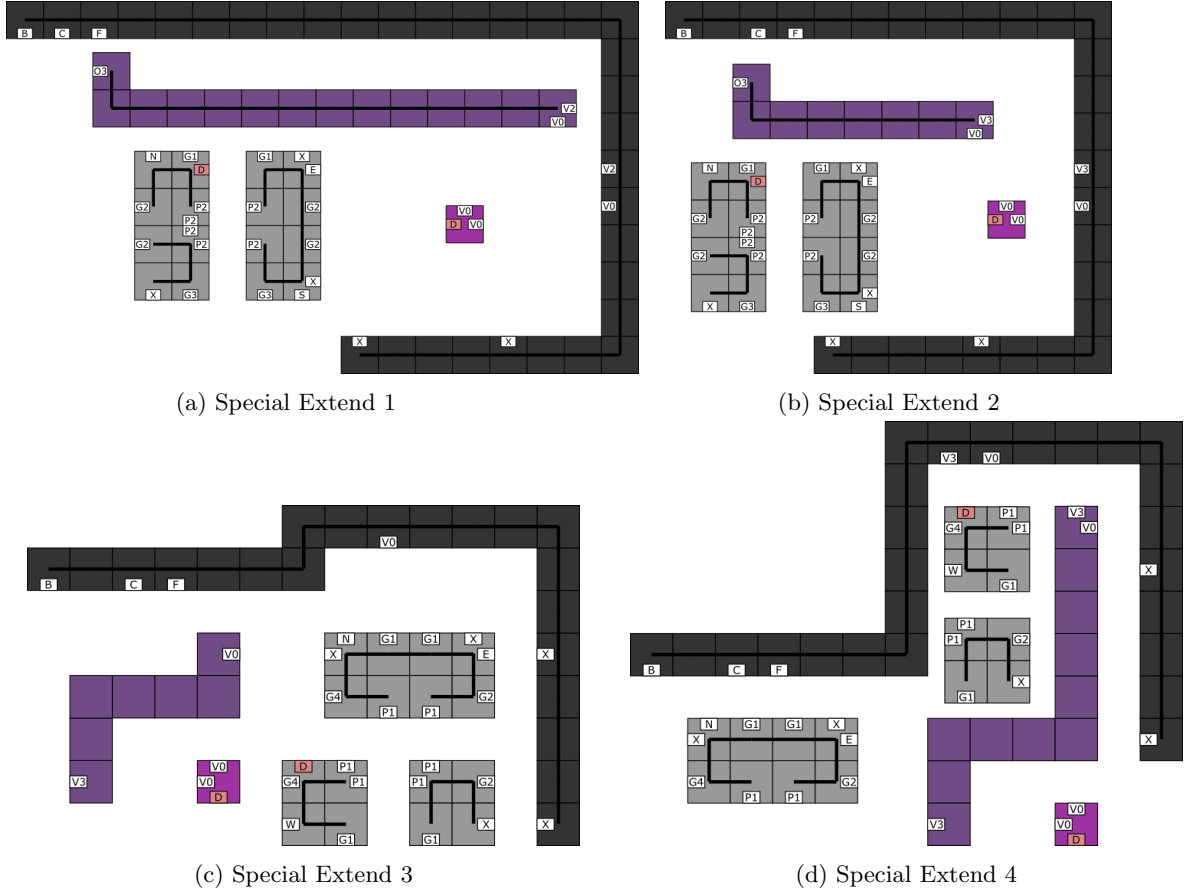

	\centering
	\begin{subfigure}[b]{.45\textwidth}
		\centering
		\includegraphics[scale=3.5]{ConstructionGadgets/Extras/special_extend_gadget_1.pdf}
		\caption{Special Extend 1}
	\end{subfigure}
	$\quad$
	\begin{subfigure}[b]{.45\textwidth}
		\centering
		\includegraphics[scale=3.5]{ConstructionGadgets/Extras/special_extend_gadget_2.pdf}
		\caption{Special Extend 2}
	\end{subfigure}
    $\quad$
	\begin{subfigure}[b]{.45\textwidth}
		\centering
		\includegraphics[scale=4.0]{ConstructionGadgets/Extras/special_extend_gadget_3.pdf}
		\caption{Special Extend 3}
	\end{subfigure}
    $\quad$
	\begin{subfigure}[b]{.45\textwidth}
		\centering
		\includegraphics[scale=4.0]{ConstructionGadgets/Extras/special_extend_gadget_4.pdf}
		\caption{Special Extend 4}
	\end{subfigure}
	\caption{These are the special-case extension gadgets. While slightly different, all extension gadgets utilize the same mechanics shown in Section~\ref{sec:lineconstruction}. (a-d) The \emph{B, C, F, X} glues on the extension gadgets (dark grey) are used for gadget attachment. The second \emph{X} glue allows the first portions of the new path piece (light grey) to attach. The \emph{V} glues are utilized by the extension helpers (purple) in order to detach the gadget.}
	\label{fig:specialextendgadgets}
\end{figure} 


\noindent \textbf{Special Case Extension Gadgets}
As with the walking gadgets, all extension gadgets are mechanically the same as the gadget introduced in Section~\ref{sec:lineconstruction}. While their function of these gadgets are the same (to extend the \emph{path} by one section), the primary difference is in their geometry.

\textbf{Special Extend 1 \& 2}
Two special gadgets (Figure~\ref{fig:specialextendgadgets}a-b) are required to extend the path after the standard extension gadget (Sec~\ref{sec:lineconstruction}) builds the first \emph{path} portion. The special extend 1 gadget (Fig~\ref{fig:specialextendgadgets}a) is designed to build the second \emph{path} portion. Once the second \emph{path} portion has been built, the special extend 2 gadget (Fig~\ref{fig:specialextendgadgets}b) carries out all remaining forward extensions, with the exception of two special cases after a left turn.

\textbf{Special Extend 3 \& 4}
After a left extend (Sec~\ref{sec:turning}, two more special gadgets (Fig~\ref{fig:specialextendgadgets}c-d) are required to extend the \emph{path} to a sufficient length that allows the standard walking/extending gadgets to be used. The first two of these forward extensions require special case gadgets. The first of those gadgets (Fig~\ref{fig:specialextendgadgets}c) extends the \emph{path} in the new direction after the turn. The second forward extension after a left turn requires a special gadget as well (Fig~\ref{fig:specialextendgadgets}d). This gadget builds an additional \emph{path} portion in the direction of the turn.

\end{document}